\documentclass[lettersize,journal]{IEEEtran}
\usepackage{amsmath,amsfonts}
\usepackage{amsthm}
\usepackage{breqn}
\usepackage{algorithm}
\usepackage{algpseudocode}
\usepackage{xcolor}
\usepackage{array}
\usepackage[caption=false,font=normalsize,labelfont=sf,textfont=sf]{subfig}
\usepackage{textcomp}
\usepackage{stfloats}
\usepackage{url}
\usepackage{verbatim}
\usepackage{graphicx}
\usepackage{cite}
\usepackage[hidelinks]{hyperref}
\usepackage[acronym]{glossaries-extra}
\usepackage[capitalize]{cleveref}
\setabbreviationstyle[acronym]{long-short}
\glssetcategoryattribute{acronym}{nohyperfirst}{true}

\newtheorem{theorem}{Theorem}
\newtheorem{lemma}{Lemma}

\theoremstyle{definition}
\newtheorem{definition}{Definition}
\newtheorem{assumption}{Assumption}
\newtheorem{formulation}{Formulation}
\newtheorem{problem}{Problem}

\theoremstyle{remark}
\newtheorem{remark}{Remark}

\renewcommand{\baselinestretch}{0.968}

\newacronym{acr:amod}{AMoD}{Autonomous Mobility-on-Demand}
\newacronym{acr:mod}{MoD}{Mobility-on-Demand}
\newacronym{acr:od}{OD}{Origin-Destination}
\newacronym{acr:odd}{ODD}{Operational Design Domain}
\newacronym{acr:andp}{AMoD-NDP}{Autonomous Mobility-on-Demand Network Design Problem}
\newacronym{acr:ndp}{TNDP}{Transportation Network Design Problem}
\newacronym{acr:cg}{CG}{Column Generation}
\newacronym{acr:mp}{MP}{Master Problem}
\newacronym{acr:rmp}{RMP}{Restricted Master Problem}
\newacronym{acr:milp}{MILP}{Mixed Integer Liner Program}
\newacronym{acr:lp}{LP}{Linear Program}
\newacronym{acr:sprc}{SPRC}{Shortest Path with Resource Constraint}
\newacronym{acr:ro}{RO}{Robust Optimization}
\newacronym{acr:av}{AV}{Autonomous Vehicle}
\newacronym{acr:qos}{QoS}{Quality of Service}

\newacronym{acr:ue}{UE}{User Equilibrium}
\newacronym{acr:sue}{SUE}{Stochastic User Equilibrium}

\newcommand{\graph}{\mathcal{G}}
\newcommand{\nodes}{\mathcal{V}}
\newcommand{\edges}{\mathcal{E}}
\newcommand{\tup}[1]{\left(#1\right)}
\newcommand{\des}{\mathbf{x}}
\newcommand{\flow}{\mathbf{y}}

\begin{document}

\title{Where Should Robotaxis Operate? \\Strategic Network Design for Autonomous Mobility-on-Demand}

\author{Xinling Li, Gioele Zardini
\thanks{The authors are with the Laboratory for Information and Decision Systems, Massachusetts Institute of Technology, Cambridge, MA 02139, USA, {\tt \{xinli831,gzardini\}@mit.edu}}
\thanks{This work was supported by the Sidara Urban Seed Grant Program at the Norman B. Leventhal Center for Advanced Urbanism, Massachusetts Institute of Technology, and by Xinling Li's Mathworks Fellowship.}}



\maketitle

\begin{abstract}
The emergence of \gls{acr:amod} services creates new opportunities to improve the efficiency and reliability of on-demand mobility systems. 
Unlike human-driven \gls{acr:mod}, \gls{acr:amod} enables fully centralized fleet control, but it also requires appropriate infrastructure, so that vehicles can operate safely only on a suitably instrumented subnetwork of the roads. 
Most existing \gls{acr:amod} research focuses on fleet control (matching, rebalancing, ridepooling) on a fixed road network and does not address the joint design of the service network and fleet capacity.
In this paper, we formalize this strategic design problem as the \gls{acr:andp}, in which an operator selects an operation subnetwork and routes all passengers, subject to infrastructure and fleet constraints and route-level quality-of-service requirements. 
We propose a path-based mixed-integer formulation of the \gls{acr:andp} and develop a column-generation-based algorithm that scales to city-sized networks. 
The master problem optimizes over a restricted set of paths, while the pricing problem reduces to an elementary shortest path with resource constraints, solved exactly by a tailored label-correcting algorithm. 
The method provides an explicit certificate of the optimality gap and extends naturally to a robust counterpart under box uncertainty in travel times and demand.
Using real-world data from Manhattan, New York City, we show that the framework produces stable and interpretable operation subnetworks, quantifies trade-offs between infrastructure investment and fleet time, and accommodates additional path-level constraints, such as limits on left turns as a proxy for operational risk. 
These results illustrate how the proposed approach can support strategic planning and policy analysis for future \gls{acr:amod} deployments.
\end{abstract}


\section{Introduction}\label{sec:intro}
Rapid urbanization continues to increase transportation demand in major cities worldwide~\cite{un_climate_cities_pollution}. 
The resulting growth in private vehicle use has led to severe congestion, with substantial losses in productive time and increased emissions~\cite{schrank20112011}. 
According to the Intergovernmental Panel on Climate Change, the transport sector alone accounts for roughly 15\% of global greenhouse gas emissions~\cite{ipcc_ar6_wg3_transport}.
Addressing these challenges requires not only new vehicle technologies, but also new ways of organizing and operating urban mobility systems.

Over the past decade, \gls{acr:mod} services such as Uber and Lyft have emerged as flexible, on-demand alternatives to private car ownership, and have experienced rapid global growth~\cite{marketsandmarkets}.
While \gls{acr:mod} can improve accessibility and reduce parking demand, empirical evidence indicates that, in their current form, these services can also exacerbate congestion and emissions due to high levels of empty vehicle travel and deadheading~\cite{beojone2021inefficiency,erhardt2019transportation}.
To realize the potential benefits of \gls{acr:mod}, vehicles must be coordinated efficiently so that demand is served with minimal cruising and spatial mismatch between supply and demand.
In practice, achieving such coordination is challenging in human-driven \gls{acr:mod} systems because drivers act autonomously and may not follow system-optimal repositioning or routing policies.
Recent studies show that decentralized driver behavior and self-repositioning can lead to inefficient dispatch patterns and excessive cruising, even when the platform uses sophisticated matching algorithms~\cite{castillo2025matching, buchholz2015spatial, brar2024vehicle}.

The recent emergence of \gls{acr:mod} with \glspl{acr:av}, or \gls{acr:amod}, fundamentally changes this landscape.
In \gls{acr:amod} systems, vehicles are owned and operated by a central operator, enabling fully centralized fleet coordination.
A growing body of work has therefore focused on operational control of \gls{acr:amod}, including matching, rebalancing, and ridepooling algorithms~\cite{zardini2022analysis,li2025reproducibility}.
However, most existing models assume that the fleet operates on a \emph{fixed} underlying road network, following the standard assumptions inherited from human-driven \gls{acr:mod}.
This assumption is reasonable when drivers supply their own vehicles and operate on the existing network, which is outside of the control of the platform.

For \gls{acr:amod}, by contrast, both the fleet and the service network are under the operator's control. 
\glspl{acr:av} require appropriate physical and/or digital infrastructure for safe operation, and most deployments are initially restricted to a limited \gls{acr:odd}~\cite{saeed2019road,kockelman2017assessment}. 
The operator must therefore decide \emph{which} parts of the road network to instrument or leverage for autonomous operation and \emph{how large} a fleet to deploy on the resulting operation network.
These long-term design choices directly affect service profitability, operational risk, and system performance.
They shape feasible routes, travel times, and the reliability of fleet operations.
In this sense, \gls{acr:amod} poses a new type of \gls{acr:ndp}, in which a centralized operator jointly designs the service network and routes vehicles and passengers on it.

This paper addresses the resulting strategic design question:
\emph{how should an \gls{acr:amod} operator jointly design its service
network and fleet capacity, subject to infrastructure budgets,
fleet-time limits, and route-level quality constraints?}
Answering this question requires moving beyond fixed-network
models and explicitly incorporating network instrumentation
decisions into the \gls{acr:amod} optimization framework.

\paragraph*{Statement of contribution}
In this work, we make three main contributions.
First, we formally define the \gls{acr:andp}, in which a centralized operator selects a subnetwork of the road graph for autonomous operation and jointly determines passenger routing and fleet utilization.
The formulation captures infrastructure investment, fleet-time constraints (directly linked to fleet size), and route-level \gls{acr:qos} and risk constraints.
Second, we develop a scalable path-based mixed-integer formulation and a \gls{acr:cg}-based algorithm tailored to the \gls{acr:andp}.
The \gls{acr:lp} master problem optimizes over a restricted set of paths, and the pricing problem reduces to an elementary \gls{acr:sprc} on the road network.
We design an exact label-correcting pricing algorithm with \gls{acr:od}-specific preprocessing and prove that the overall method yields the optimal \gls{acr:lp} relaxation value together with an explicit optimality gap certificate for the recovered integer solution.
We further show that, under box uncertainty in travel times and demand, the robust counterpart preserves the same decomposition structure.
Finally, using real-world data from Mahattan, NYC,  we demonstrate that the framework scales to city-sized instances and produces interpretable operation subnetworks.
Through case studies, we quantify trade-offs between infrastructure budgets and fleet size and illustrate how additional path-level constraints, such as limits on left turns as a proxy for operational risk, affect profitability and service coverage.
These experiments highlight how the proposed model can serve as a decision-support tool for both operators and municipalities.

\paragraph*{Organization of the manuscript}
The remainder of this paper is organized as follows.
In \Cref{sec:review}, we review the literature on network design for urban transportation systems and position the \gls{acr:andp} relative to road and transit network design.
In \Cref{sec:system-model}, we formally define the \gls{acr:andp} and present the model formulation.
In \Cref{sec:methodology}, we introduce the \gls{acr:cg}-based algorithm and its robust extension.
Case studies are presented in \Cref{sec:exp} to illustrate the flexibility of the model and the effectiveness of the proposed algorithm.
Finally, \Cref{sec:conclusion} concludes with a discussion of future research directions.

\section{Related Work} 
\label{sec:review}
The \gls{acr:ndp} has been a central topic in transportation and infrastructure planning for decades. 
Broadly speaking, it seeks to design or improve a transportation network so as to optimize a system-level performance metric under demand and budget constraints.
Two classical streams of research are the \emph{road network design} problem and the \emph{transit network design} problem.
In this section we briefly review these two areas and then highlight the key differences that distinguish the proposed \gls{acr:andp} from both.

\subsection{Road network design}
The road network design problem focuses on modifying an existing road network to improve system performance, typically aiming to reduce congestion or average travel time.
Depending on the decision variables, it is usually classified into three types:
\begin{enumerate}
    \item \emph{Continuous} road network design, which adjusts the capacities of existing road segments.
    \item \emph{Discrete} road network design, which decides whether to add or remove individual links.
    \item \emph{Mixed} road network design, which combines both capacity changes and discrete link additions.
\end{enumerate}

Regardless of the type, road network design is mostly formulated as a bi-level optimization problem. 
On the upper level, a system regulator decides how to invest in the road infrastructure subject to a limited budget.
On the lower level, travelers react to these network changes by choosing routes according to some equilibrium principle, typically a \gls{acr:ue} or \gls{acr:sue}.

The notion of \gls{acr:ue} originates from Wardrop's first principle~\cite{wardrop1952road}, which describes a network game where drivers selfishly route themselves to minimize their own travel cost when that cost depends on link flows.
The equilibrium flow can be computed using Beckmann's transformation, which provides an equivalent single-level optimization formulation whose optimal solution satisfies the \gls{acr:ue} conditions~\cite{sheffi1985urban}.
This framework has been extended to account for randomness in perceived travel cost, leading to \gls{acr:sue} models~\cite{daganzo1977stochastic}.
Both \gls{acr:ue} and \gls{acr:sue} form the standard lower-level models in bi-level road network design.

With flows characterized by equilibrium conditions, the road network design problem becomes a bi-level optimization problem with equilibrium constraints, which is NP-hard in general.
Early work such as~\cite{leblanc1975algorithm} considered discrete network design with a \gls{acr:ue} lower level and proposed an exact branch-and-bound algorithm that explicitly accounts for phenomena such as Braess' paradox.
While exact, these methods do not scale to realistic urban networks.

Subsequent research has focused on improving scalability, leading to a rich toolbox including relaxation methods~\cite{poorzahedy1982approximate,wang2013global}, heuristics and metaheuristics~\cite{chen1991network,poorzahedy2007hybrid}, decomposition approaches~\cite{gao2005solution,fontaine2014benders}, and single-level reformulations based on variational inequalities or Karush-Kuhn-Tucker (KKT) conditions~\cite{farvaresh2011single,luathep2011global}.
Continuous road design, in which capacities rather than discrete links are optimized, has a similar structure and is often used as a tractable approximation of discrete problems~\cite{dantzig1979formulating,abdulaal1979continuous}.
However, even linear bi-level problems remain NP-hard~\cite{ben1990computational}, and a large body of work continues to develop heuristics and approximation schemes for the continuous case~\cite{marcotte1986network,abdulaal1979continuous,chiou2005bilevel}.

Comprehensive surveys such as~\cite{farahani2013review} review these models and algorithms in detail.
We emphasize that, in all of these works, responsibility for network design and route choice is split: a regulator designs the infrastructure, and self-interested drivers route themselves, leading to the characteristic bi-level structure~\cite{bard1982explicit}.
While in a broader context road network design can also encompasses the regulation of intersections, traffic signaling, and road tolling schemes, this is not the focus of this paper.

\subsection{Transit network design}
\label{subsec:related-transit}
Transit network design has a long history and has grown in complexity and realism over time.
In its basic form, the problem is to design a set of transit routes (i.e., sequences of stops) on a given road network to maximize user experience, typically measured through average travel time, the number of transfers, and network coverage~\cite{mandl1980evaluation}.
More recently, route design is often coupled with frequency setting, since service frequency influences both waiting times and effective line capacities.

The combinatorial nature of transit network design stems from the need to generate candidate routes.
The space of potential routes is enormous, and this has motivated a number of heuristics for path generation and selection.
Early heuristic methods~\cite{baaj1991ai,cipriani2012transit,baaj1995hybrid} use demand profiles, practical design rules, and operator experience to generate an initial set of plausible routes, and then apply additional heuristics to select a subset of lines and frequencies.
These methods are computationally efficient, but typically lack a feedback loop that iteratively improves the candidate route set based on the quality of the final solution.

To address this,~\cite{lee2005transit} proposed an iterative procedure that starts from a shortest-path-based route network and gradually trades off in-vehicle time and waiting time, using demand assignment results to update frequencies.
While this methods ensures monotone improvement, it relies on shortest paths for initialization, and can therefore lead to conservative designs with large optimality gaps, requiring many iterations to reach high-quality solutions.

Another line of work formulates transit network and frequency designs as mathematical programs.
Compared to pure heuristics, these formulations offer formal optimality guarantees or bounds, but they are challenging to solve at city scale~\cite{claessens1998cost}.
As a consequence, many studies decompose decisions into multiple phases, for instance, restricting to a predefined subset of transit routes and the solving for route selection and frequencies~\cite{guan2006simultaneous,goossens2004branch,cancela2015mathematical}.
In such approaches, the quality of the initial route set largely determines the quality of the final design, since routes are not generated adaptively during optimization.

\gls{acr:cg} has been particularly successful in this context.
\cite{borndorfer2007column} is among the first to apply \gls{acr:cg}  to line
planning in public transport. 
\gls{acr:cg} decomposes the problem into a master problem and a pricing problem, where the pricing subproblem generates new routes (columns) that can improve the objective. 
This yields a principled way to combine route generation with route selection and frequency design in a single framework. 
Follow-up works extend this idea to jointly consider line planning and passenger routing~\cite{borndorfer2012direct,jin2016optimizing,bertsimas2021data}, and to incorporate additional constraints and quality criteria. 
These \gls{acr:cg}-based approaches are less sensitive to the initial line set, provide anytime solutions with explicit optimality gaps, and have demonstrated scalability on large transit networks; see the survey~\cite{duran2022survey} for a recent overview.

\subsection{AMoD network design and research gap}
\label{subsec:related-amod}
Conceptually, the \gls{acr:andp} shares high-level similarities
with both road and transit network design: in all three cases a
network designer chooses a service network topology, and travelers (or vehicles) are subsequently routed on that network. 
However, the service paradigm and behavioral assumptions underlying \gls{acr:amod} lead to fundamental differences along three main dimensions.

\paragraph*{Centralized control vs. user equilibrium}
In classical road network design, travelers use their own vehicles and retain full autonomy over route choice.
This motivates the use of \gls{acr:ue}/\gls{acr:sue} models and leads to a bi-level formulation in which the upper-level regulator chooses infrastructure and the lower level captures the equilibrium response of self-interested users.

In contrast, \gls{acr:amod} systems rely on centrally managed fleets of \glspl{acr:av}. 
\glspl{acr:av} are owned and dispatched by the operator, who is responsible both for deciding which links to instrument and for routing all vehicles and passengers on the resulting operation network. 
Passenger routes are not the outcome of decentralized user optimization, but the result of a centralized assignment optimized to maximize the operator’s objective. 
As a consequence, the \gls{acr:andp} does \emph{not} exhibit the bilevel structure characteristic of road network design, but rather a single-level mixed-integer formulation in which infrastructure and routing decisions are jointly optimized.

\paragraph*{Profit-driven, route-level quality constraints}
Transit network design is typically motivated by social welfare and public service objectives. 
Design criteria emphasize system-wide efficiency, affordability, and sustainability, often captured through aggregate metrics such as average travel time, total operating cost, or network coverage~\cite{duran2022survey}.
These aggregate measures can mask poor service for small subsets of travelers; only a limited subset of works explicitly enforce route-level guarantees, and doing so usually requires heavy heuristics to remain tractable~\cite{baaj1995hybrid,bertsimas2021data}.

By contrast, \gls{acr:amod} operators are profit-driven entities competing in a market with low switching costs. 
In such a setting, passenger-level service quality (e.g., door-to door travel time, number of risky maneuvers, or reliability) is
directly tied to demand retention and profitability. 
It is therefore natural to impose route-level \gls{acr:qos} and risk-related constraints, ensuring that \emph{every} passenger route satisfies minimum service criteria. 
The \gls{acr:andp} studied in this paper explicitly incorporates such path-level constraints, which substantially complicate the formulation and preclude purely link-based models.

\paragraph*{Service paradigm and modeling implications}
Finally, the underlying service paradigm differs. 
Transit systems are based on fixed lines, schedules, and transfers: waiting times and transfers are central determinants of user experience and are key modeling elements. 
\gls{acr:amod}, in contrast, offers on-demand, door-to-door service with no transfers, so the primary levers are network instrumentation, vehicle routing, and fleet sizing~\cite{zardini2022co}.
Transfers and waiting time play a negligible role in the \gls{acr:amod} context, but route-level travel time and risk metrics become central.

\paragraph*{Positioning of this work}
The above discussion highlights the distinctive nature of the \gls{acr:andp}: a centralized operator jointly designs the service network and the routing of passengers and vehicles, subject to explicit route-level quality and risk constraints.
Classical road network design assumes self-interested users and bi-level \gls{acr:ue}; 
transit network design is organized around fixed lines, schedules, and transfers, often with social-welfare objectives and aggregate performance
measures. 
Existing \gls{acr:amod} research, in turn, has focused
primarily on operational control (e.g., matching and
rebalancing) on a fixed road network~\cite{zardini2022analysis}.

In contrast, the problem studied in this paper combines elements of all three areas: it is a network design problem, but the decision maker is a profit-driven \gls{acr:amod} operator with centralized control and route-level \gls{acr:qos} and risk constraints.
This leads to a single-level, path-based \gls{acr:milp} that differs structurally from bi-level road network design and from line-based transit formulations. 
The remainder of the paper develops this \gls{acr:andp} formulation, proposes a scalable \gls{acr:cg}-based algorithm with optimality-gap guarantees, and evaluates the resulting framework on city-scale \gls{acr:amod} applications.

\section{System Model and AMoD Network Design Problem}
\label{sec:system-model}
In this section, we formalize the setting for the \gls{acr:andp}.
We first introduce the base road network and demand model, then define paths and path-based specifications.
Building on these preliminary concepts, we formulate a general \gls{acr:andp} and state the structural assumptions on the objective that will be used throughout the paper.
We conclude with the specific profit-maximizing \gls{acr:andp} instance that is the focus of the remainder of the work.

\subsection{Base Road Network and Demand}
\label{subsec:base-network-demand}
The \gls{acr:amod} operator designs a service \emph{operation network} within a given road infrastructure and centrally routes \glspl{acr:av} and passengers on such network.

\begin{definition}[Base Road Network]
The \emph{base road transportation network} is represented by a directed graph~$\graph=\tup{\nodes, \edges}$, where~$\nodes$ is the set of nodes and~$\edges \subseteq \nodes\times \nodes$ is the set of directed edges.
For each edge~$e\in \edges$, let~$s(e)$ and~$t(e)$ denote its source and sink nodes, respectively, and assume that there is at most one edge between any ordered pair of nodes.

Each edge~$e\in \edges$ is associated with a vector of time-invariant attributes
\[
A(e) = \big(A_1(e),\dots,A_n(e)\big) \in
\mathcal A_1 \times \dots \times \mathcal A_n,
\]
where~$A_i : \edges \to \mathcal A_i$ returns attribute~$i$ of edge~$e$. 
In particular, $T : \edges \to \mathbb R_{>0}$ and~$L : \edges \to \mathbb R_{>0}$ denote the travel time and length of edge~$e$.
\end{definition}

\begin{definition}[Demand]
\label{def:demand}
For any ordered pair of distinct nodes~$\tup{i,j} \in \nodes \times \nodes$ with~$i \neq j$, let~$d_{ij} := \tup{i,j,\alpha_{ij}}$ denote the travel demand from origin~$i$ to destination~$j$, where~$\alpha_{ij} \in \mathbb{R}_{>0}$ is the number of travelers wishing to move from~$i$ to~$j$ during the planning horizon.
The total demand is the set
\[
D := \{\,d_{ij} \mid i,j \in \nodes,\ i \neq j,\ \alpha_{ij} > 0\,\}.
\]
We write
\[
\tilde D := \{(i,j) \in \nodes \times \nodes \mid d_{ij} \in D\}
\]
for the corresponding set of OD node pairs.
\end{definition}

\gls{acr:av}s travel along paths in the road network.

\begin{definition}[Path]
\label{def:path}
Let~$i,j \in \nodes$ be distinct nodes. 
A path from~$i$ to~$j$ can be represented in either of the following equivalent forms:
\begin{enumerate}
    \item A node sequence~$p := \tup{v_1,\dots,v_\ell}$ such that~$v_1 = i$,~$v_\ell = j$, and~$\tup{v_k,v_{k+1}} \in E$ for all~$k = 1,\dots,\ell-1$.
    \item An edge sequence~$p := \tup{e_1,\dots,e_{\ell-1}}$ such
    that~$t(e_k) = s(e_{k+1})$ for all~$k = 1,\dots,\ell-2$.
\end{enumerate}
\end{definition}

\begin{remark}
\label{rem:path-representation}
The node-sequence and edge-sequence representations are in one-to-one correspondence: each node path uniquely determines an edge path, and vice versa.
\end{remark}

\begin{definition}[Demand-satisfactory Path]
\label{def:demand-satisfactory-path}
A path~$p$ is \emph{demand-satisfactory} for OD pair~$\tup{i,j} \in \tilde D$ if it starts at~$i$ and ends at~$j$. 
The set of demand-satisfactory paths for~$\tup{i,j}$ is denoted $P_{ij}$, and we write
\[
P := \bigcup_{\tup{i,j}\in \tilde D} P_{ij}
\]
for the union of all demand-satisfactory paths.
\end{definition}

The operation network induced by a design~$\des$ is a subgraph of~$\graph$ that contains only the edges on which the \gls{acr:amod} service is allowed to operate.
This notion will be made precise in \cref{def:andp}.

\subsection{Paths and Path-based Quality-of-Service Constraints}
\label{subsec:path-qos}

Many service specifications for \gls{acr:amod} are naturally expressed at the path level, e.g., bounds on end-to-end travel time or limits on risky maneuvers.
To capture these, we associate attributes via paths.
For a path~$p\in P_{ij}$, its travel time and length are~$T(p):=\sum_{e\in p}T(e)$ and~$L(p):=\sum_{e\in p}L(e)$, respectively.
More generally, let~$H(p)=\tup{H_1(p),\ldots, H_m(p)}$ be a vector of path attributes, where each~$H_k\colon P\to \mathbb{R}$ may depend on the sequence of edges in~$p$ (e.g., number of left turns, number of intersections, or a risk score).

A generic path-level \gls{acr:qos} or risk constraint for OD pair~$\tup{i,j}$ can then be written as
\begin{equation}
    \label{eq:path-constraint-generic}
    H_k(p)\leq M^{k}_{ij}, \qquad \forall p\in P_{ij},
\end{equation}
for prescribed thresholds~$M^{k}_{ij}$.
In the simplest cast,~$H_1(p)=T(p)$ encodes a bound on end-to-end travel time;
in \cref{sec:left-turn-constraint} we will instantiate~$H_k$ to count left turns and obtain a risk-related constraint.

The path admissible set for OD~$\tup{i,j}$ under a collection of constraints of the form~\cref{eq:path-constraint-generic} is
\begin{equation*}
    P_{ij}^{\text{adm}} :=
\big\{\,p \in P_{ij} \,\big|\, H_k(p) \le M^{k}_{ij}\ 
\text{for all relevant } k \big\}.
\end{equation*}
These admissible paths will be the building blocks for the flow variables in the path-based formulation introduced later (\cref{subsec:path-formulation}).

\subsection{General AMoD Network Design Problem}
\label{subsec:general-andp}
We now formalize the \gls{acr:andp} at a high level.
The operator chooses (i) which edges to instrument for autonomous operation, and (ii) a feasible flow assignment on the resulting operation network that serves a subset of the demand.

\begin{definition}[AMoD--NDP]
\label{def:andp}
Given a base network~$\graph=\tup{\nodes,\edges}$ and demand profile~$D$, the \gls{acr:andp} seeks to maximize a system functionality~$F(\des,\flow)$ over:
\begin{itemize}
    \item a binary instrumentation vector~$\des = (x_e)_{e\in E}$, where~$x_e = 1$ if edge~$e$ is instrumented for autonomous operation and~$x_e=0$ otherwise;
    \item a nonnegative passenger flow vector~$\flow = (y_e)_{e\in E}$, where~$y_e$ is the total passenger flow routed along edge~$e$.
\end{itemize}
The decisions~$\tup{\des,\flow}$ are constrained by:
\begin{align*}
    \sum_{e\in \edges} b_e x_e &\le B
    \qquad \text{(instrumentation budget)},\\
    \sum_{e\in \edges} r_e y_e &\le R
    \qquad \text{(resource / fleet-time budget)},\\
    0 \le y_e &\le c_e x_e, \forall e \in \edges
    \quad\text{(edge capacity)}, \\
    \flow \text{ obeys} & \text{ flow conservation} \quad \text{(flow feasibility)}, \\
    \flow \text{ serves} & \text{ at most the demand~$D$} \quad \text{(maximum demand)},
\end{align*}
where $b_e > 0$ is the infrastructure cost of instrumenting edge~$e$,~$c_e > 0$ is its capacity, and~$r_e > 0$ is the resource consumed per unit flow on~$e$ (e.g., vehicle time).

The induced operation network is
\[
\graph^{\mathrm{op}}(\des) = \big(\nodes, \edges^{\mathrm{op}}(\des)\big),
\qquad
\edges^{\mathrm{op}}(\des) := \{e \in \edges \mid x_e = 1\}.
\]
\end{definition}

In general, the functionality~$F(\des,\flow)$ may depend on both decisions. 
In this paper, we focus on a class of models where~$\des$ affects performance only through the feasible set of $\tup{\des,\flow}$ and the objective is additive over edges.

\begin{assumption}[Link-separable functionality]
\label{ass:link-separable}
The functionality depends on the flows via a link-separable
function
\[
F(\des,\flow) = \sum_{e\in E} F_e(y_e),
\]
and~$\des$ influences performance only by restricting the feasible set of $\tup{\des,\flow}$ through the budget, capacity, and flow conservation constraints.
\end{assumption}

\begin{assumption}[Affine per-edge contribution]
\label{ass:affine}
For each edge~$e \in \edges$, the per-edge function~$F_e$ is affine:
\[
F_e(y_e) = \beta_{e,0} + \beta_e\, y_e,
\]
where~$\beta_e \in \mathbb R$ is the marginal contribution of one unit of flow on edge~$e$ and~$\beta_{e,0} \in \mathbb R$ is a constant intercept.
\end{assumption}

\cref{ass:affine} is standard in system design phases where the performance measure is approximated by aggregated usage with a constant marginal benefit (or cost) per unit flow over the operating range of interest~\cite{ahuja1988network,sheffi1985urban}. 
It includes, for example, total distance traveled, time- or distance-proportional operating costs, and profit objectives built from edge-level revenue and cost components.

The intercept terms in \cref{ass:affine} play no role in the optimal solution, which allows us to simplify the notation.

\begin{lemma}[Irrelevance of intercepts]
\label{lem:intercept}
Under \cref{ass:link-separable} and \cref{ass:affine}, the intercepts~$\beta_{e,0}$ do not affect the set of optimal solutions of the \gls{acr:andp}.
Equivalently, any optimizer for the problem with objective~$\sum_{e\in \edges} \beta_e y_e$ is also optimal for the problem with objective~$\sum_{e\in \edges} (\beta_{e,0} + \beta_e y_e)$, and vice versa.
\end{lemma}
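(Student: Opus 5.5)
The argument only uses the fact that the intercepts contribute a constant to the objective. By \cref{ass:link-separable} and \cref{ass:affine}, for every feasible pair $\tup{\des,\flow}$ we can write
\[
F(\des,\flow) = \sum_{e\in\edges}\tup{\beta_{e,0}+\beta_e y_e} = C + G(\flow),
\qquad C := \sum_{e\in\edges}\beta_{e,0},\quad G(\flow):=\sum_{e\in\edges}\beta_e y_e .
\]
Here $C$ is a finite real number, since $\edges$ is finite. It depends on neither $\des$ nor $\flow$. The first step is to record that the feasible set $\mathcal F$ of the \gls{acr:andp} is fixed by the constraints in \cref{def:andp}: the budgets, capacity coupling, flow conservation and demand bound. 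By \cref{ass:link-separable}, $\des$ enters only through these constraints. Hence $\mathcal F$ is the same set under either objective; only the objective changes.

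The second step is the order-preservation argument. Suppose $\tup{\des^*,\flow^*}\in\mathcal F$ maximizes $G$ over $\mathcal F$. Then for every $\tup{\des,\flow}\in\mathcal F$ we have $G(\flow^*)\ge G(\flow)$. Adding $C$ to both sides gives $F(\des^*,\flow^*)\ge F(\des,\flow)$, so $\tup{\des^*,\flow^*}$ maximizes $F$. The converse follows by subtracting $C$. The two problems therefore have identical argmax sets, and their optimal values differ exactly by $C$.

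The third step handles degenerate cases. If $\mathcal F$ is empty, or if $G$ is unbounded above on $\mathcal F$, then the same holds for $F$, and both optimal sets are empty; the equivalence still holds. I do not expect a real obstacle here. The only point needing care is to state explicitly that the intercepts do not depend on $x_e$. An edge that is not instrumented still contributes $\beta_{e,0}$, because the sum in \cref{ass:link-separable} runs over all of $\edges$, not over $\edges^{\mathrm{op}}(\des)$. This is exactly what makes $C$ design-independent. Without this observation the intercepts could act as fixed per-edge rewards for instrumentation and would change the optimal designs.
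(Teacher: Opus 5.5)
Your proposal is correct and follows essentially the same argument as the paper: over the fixed feasible set $\mathcal F$, the intercepts contribute only the constant $\sum_{e\in\edges}\beta_{e,0}$, so adding or subtracting it preserves the ordering of feasible solutions and hence the set of maximizers. Your extra remarks on the degenerate cases and on the intercepts being counted even for uninstrumented edges are correct additions that the paper's proof leaves implicit.
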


\begin{proof}
Let~$\mathcal F$ denote the feasible set of all~$\tup{\des,\flow}$ that satisfy the budget, capacity, and flow-conservation constraints.
Consider the two optimization problems
\[
\max_{\tup{\des,\flow}\in \mathcal F} \ \sum_{e\in \edges} \beta_e y_e
\quad\text{and}\quad
\max_{\tup{\des,\flow}\in \mathcal F} \
\sum_{e\in \edges} (\beta_{e,0} + \beta_e y_e).
\]
For any feasible $\tup{\des,\flow}\in \mathcal F$ we have
\[
\sum_{e\in \edges} (\beta_{e,0} + \beta_e y_e)
= \sum_{e\in \edges} \beta_{e,0} + \sum_{e\in \edges} \beta_e y_e.
\]
The term~$\sum_{e\in \edges} \beta_{e,0}$ is constant over~$\mathcal F$ and therefore does not change the ordering of feasible solutions. 
Hence both problems have the same set of maximizers.
\end{proof}

In the remainder of the paper, we therefore drop the intercept terms and write
\[
F(\des,\flow) = \sum_{e\in E} \beta_e y_e
\]
without loss of generality.

\subsection{Profit-maximizing AMoD-NDP with Fleet and Infrastructure Constraints}
\label{subsec:profit-instance}

We now specialize the general model to the concrete instance studied in the rest of the paper. 
In this instance, the system functionality corresponds to the expected operator profit over the planning horizon.

For each edge~$e \in \edges$, let~$\beta_e$ denote the expected net profit per unit passenger flow on~$e$ (fare minus variable operating costs). 
The operator chooses which edges to instrument and how to route demand so as to maximize total profit subject to:
\begin{itemize}
    \item an infrastructure budget~$B$ for constructing or
    enabling autonomous-driving infrastructure,
    \item a fleet-time budget~$R$ capturing the available
    vehicle-hours, which is directly related to the fleet size,
    \item link capacities~$c_e$.
\end{itemize}

In this setting, and under \cref{lem:intercept}, the \gls{acr:andp} takes the form

\begin{equation}
\label{eq:amod-ndp-profit}
\begin{aligned}
\max_{\des,\flow} \quad &
\sum_{e\in \edges} \beta_e y_e\\
\text{s.t.}\quad &
\sum_{e\in \edges} b_e x_e \le B,\\
& \sum_{e\in \edges} T(e)\, y_e \le R,\\
& 0 \le y_e \le c_e x_e, \qquad \forall e \in \edges,\\
& \flow \text{ induced by a feasible flow assignment on }
\graph^{\mathrm{op}}(x)\\
& \hspace{0.65cm}\text{that serves at most the demand } D,\\
& x_e \in \{0,1\},\quad y_e \ge 0,\qquad \forall e \in \edges.
\end{aligned}
\end{equation}

The fleet-time budget~$R$ is modeled by taking the resource consumption~$r_e$ in \cref{def:andp} equal to the edge travel time~$T(e)$.
In \cref{sec:methodology}, we refine the aggregate edge flows~$\flow$ into OD-specific and
path-specific flows, derive link- and path-based formulations of~\eqref{eq:amod-ndp-profit}, and develop a \gls{acr:cg}-based algorithm that scales to real-world city networks while accommodating path-based \gls{acr:qos} and risk constraints of the form described in \cref{subsec:path-qos}.

\subsection{Discussion}
In this work we focus on a static, steady‑state design problem. 
The passenger flows represent passenger‑carrying vehicle flow aggregated over a planning horizon (one day in the case study), and the fleet‑time budget captures the total vehicle‑hours required to provide this service.
We deliberately abstract away time‑of‑day variations, short‑term queueing, passenger waiting times, and explicit empty‑vehicle rebalancing flows. 
These phenomena are central for operational fleet control, but incorporating them would require a time‑expanded network with many more nodes and state variables. 
Here we view the \gls{acr:andp} as providing a medium‑term infrastructure and fleet‑sizing plan, to be complemented in practice by more detailed operational policies (matching, rebalancing, charging) applied ex post on the designed operation subnetwork~\cite{zardini2022analysis}.

\section{Path-based Formulation and Column-Generation Algorithm}
\label{sec:methodology}

In this section, we instantiate the profit-maximizing \gls{acr:andp} introduced in \cref{sec:system-model} into explicit link- and path-based mixed-integer linear formulations.
We show that a path-based formulation is necessary to capture route-level \gls{acr:qos} constraints, but leads to an exponential number of potential paths.
To address this, we develop a scalable \gls{acr:cg}-based algorithm that dynamically generates only the relevant paths via a pricing problem cast as a \gls{acr:sprc} and solved exactly by a label-correcting algorithm.

Throughout this section we focus on the profit-maximizing instance~\eqref{eq:amod-ndp-profit}, and we assume that for each \gls{acr:od} pair~$\tup{o,d} \in \tilde D$ the admissible paths are those with travel time at most~$M_{od}$, i.e.,~$P_{od}^{\mathrm{adm}}
:= \{\,p \in P_{od} \mid T(p) \le M_{od}\,\}.$

Other path-level constraints can be handled analogously (see \cref{sec:left-turn-constraint} for an example of left-turn risk constraint).

\subsection{Link-based Formulation}
\label{subsec:link-formulation}
We first present a standard link-based formulation of the profit-maximizing \gls{acr:andp}.

Recall the set of \gls{acr:od} node pairs~$\tilde D$ from \cref{def:demand}.
For each node~$v\in \nodes$, denote its sets of outgoing and incoming edges by
\[
\sigma^+(v) := \{e \in \edges \mid s(e) = v\},\quad
\sigma^-(v) := \{e \in \edges \mid t(e) = v\}.
\]

\begin{formulation}[Link-based \gls{acr:andp}]
\label{form:link}
Define the decision variables
\begin{align*}
x_{ij} &=
\begin{cases}
1, & \text{if edge } \tup{i,j} \text{ is instrumented},\\
0, & \text{otherwise},
\end{cases}\\[1ex]
f_{od} &\in [0,\alpha_{od}] \qquad \text{served demand from $o$ to $d$},\\
f^{od}_{ij} &\ge 0 \qquad \text{flow from $o$ to $d$ routed on edge $\tup{i,j}$}.
\end{align*}
The link-based formulation of the profit-maximizing
\gls{acr:andp} is
\begin{equation}
\label{eq:link-formulation}
\begin{aligned}
\max_{\des,\flow} &
\sum_{\tup{o,d}\in \tilde D} \sum_{\tup{i,j}\in \edges} \beta_{ij} f^{od}_{ij} \\[0.3em]
\text{s.t.}\quad
& \sum_{\tup{i,j}\in \sigma^+(v)} f^{od}_{ij}
  - \sum_{\tup{i,j}\in \sigma^-(v)} f^{od}_{ij}
\\[-0.1em]
& \hspace{3.3em}=
\begin{cases}
 f_{od}, & v = o,\\
 -f_{od}, & v = d,\\
 0, & \text{otherwise},
\end{cases}
&& \forall \tup{o,d}\in \tilde D,\ \forall v\in V,\\[0.3em]
& 0 \le f_{od} \le \alpha_{od},
&& \forall \tup{o,d}\in \tilde D,\\[0.3em]
& \sum_{\tup{o,d}\in \tilde D} f^{od}_{ij}
  \le c_{ij} x_{ij},
&& \forall \tup{i,j}\in \edges,\\[0.3em]
& \sum_{\tup{i,j}\in \edges} b_{ij} x_{ij} \le B,\\[0.3em]
& \sum_{\tup{i,j}\in \edges} T\tup{i,j}\,
   \sum_{\tup{o,d}\in \tilde D} f^{od}_{ij}
   \le R,\\[0.3em]
& x_{ij} \in \{0,1\},\quad f^{od}_{ij} \ge 0.
\end{aligned}
\end{equation}
\end{formulation}

The objective of~\eqref{eq:link-formulation} maximizes expected profit.
Constraints enforce flow conservation for each OD pair, demand bounds, link capacities, the infrastructure budget~$B$, and the fleet-time budget~$R$.

\paragraph*{Limitations of the link-based formulation}
\cref{form:link} is compact, but it encodes the service entirely through \emph{edge flows}. This creates two fundamental limitations.
First, many service guarantees for \gls{acr:amod} are naturally path-based, such as bounds on door-to-door travel time or limits on risky maneuvers (e.g., left turns). 
These cannot in general be expressed solely as functions of edge flows~$\{f^{od}_{ij}\}$, because they depend on the \emph{sequence} of edges traversed.
Second, even when the objective is link-separable, as in \cref{ass:link-separable}, link-based decision variables make it difficult to integrate additional path-dependent metrics or constraints without substantially complicating the model.

These limitations motivate a path-based formulation in which flows are defined on \gls{acr:od}-specific paths rather than on edges.

\subsection{Path-based Formulation and Equivalence}
\label{subsec:path-formulation}

For each~$\tup{o,d}\in \tilde D$, recall the set~$P_{od}$ of demand-satisfactory paths from \cref{def:demand-satisfactory-path}, and define the admissible set
\[
P_{od}^{\mathrm{adm}} :=
\{\,p \in P_{od} \mid T(p) \le M_{od}\,\}.
\]
Let~$P^{\mathrm{adm}} := \bigcup_{\tup{o,d}\in \tilde D} P_{od}^{\mathrm{adm}}$ denote the union over all \gls{acr:od} pairs.

For each path~$p\in P_{od}^{\mathrm{adm}}$ we introduce:
\begin{itemize}
    \item a path-flow variable~$f^{p}_{od} \ge 0$ representing the amount of demand~$\tup{o,d}$ routed along path~$p$;
    \item a binary incidence parameter~$z^{od,p}_{ij}$ equal to $1$ if edge~$\tup{i,j}$ belongs to~$p$ and~$0$ otherwise;
    \item path profit and travel time
    \[
    \beta^p_{od} := \sum_{e\in p} \beta_e, \qquad
    t_p := \sum_{e\in p} T(e).
    \]
\end{itemize}

\begin{formulation}[Path-based \gls{acr:andp}]
\label{form:path}
The path-based formulation of the profit-maximizing
\gls{acr:andp} is
\begin{equation}
\label{eq:path-formulation}
\begin{aligned}
\max_{\des,\flow} \quad &
\sum_{\tup{o,d}\in \tilde D} \sum_{p\in P_{od}^{\mathrm{adm}}}
\beta^p_{od} f^p_{od} \\[0.3em]
\text{s.t.}\quad
& \sum_{p\in P_{od}^{\mathrm{adm}}} f^p_{od}
  \le \alpha_{od},
&& \forall \tup{o,d}\in \tilde D,\\[0.3em]
& \sum_{\tup{o,d}\in \tilde D} \sum_{p\in P_{od}^{\mathrm{adm}}}
  z^{od,p}_{ij} f^p_{od}
  \le c_{ij} x_{ij},
&& \forall \tup{i,j}\in \edges,\\[0.3em]
& \sum_{\tup{i,j}\in \edges} b_{ij} x_{ij} \le B,\\[0.3em]
& \sum_{\tup{o,d}\in \tilde D} \sum_{p\in P_{od}^{\mathrm{adm}}}
  t_p f^p_{od} \le R,\\[0.3em]
& x_{ij} \in \{0,1\},\quad f^p_{od} \ge 0.
\end{aligned}
\end{equation}
\end{formulation}

The path-based formulation naturally accommodates path-level \gls{acr:qos} constraints.
For instance, the travel-time constraints~$T(p) \le M_{od}$ are enforced by restricting the admissible path sets~$P_{od}^{\mathrm{adm}}$. 
In \cref{sec:left-turn-constraint} we will augment \cref{form:path} with an additional risk-related constraint on the total number of left turns.

When no explicit path-dependent constraints are present, the link- and path-based formulations are equivalent.

\begin{lemma}[Equivalence without path constraints]
\label{lem:link-path-equivalence}
Suppose that~$P_{od}^{\mathrm{adm}} = P_{od}$ for all~$\tup{o,d}\in\tilde D$, i.e., no path-level constraints are imposed.
Then \cref{form:link} and~\cref{form:path} are equivalent: there exists a bijection between their feasible solutions that preserves the objective value.
\end{lemma}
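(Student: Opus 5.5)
The plan is to build two explicit maps and check that they are mutually inverse, with the same $\des$ on both sides. Define $\Phi$ from feasible solutions $\tup{\des,\{f^p_{od}\}}$ of \cref{form:path} to \cref{form:link} by aggregation:
\[
f^{od}_{ij}:=\sum_{p\in P_{od}} z^{od,p}_{ij} f^p_{od},\qquad f_{od}:=\sum_{p\in P_{od}} f^p_{od},
\]
with $\des$ unchanged. Define $\Psi$ in the reverse direction through a \emph{canonical} flow decomposition of each commodity $\{f^{od}_{ij}\}$ into $o$--$d$ paths.

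\textbf{Step 1 ($\Phi$ is well defined).} Each $p\in P_{od}$ contributes $+1$ outflow at $o$, $-1$ at $d$, and net $0$ at every intermediate visit. This holds even when $p$ revisits nodes, since \cref{def:path} does not require simplicity. Hence the conservation constraints of \eqref{eq:link-formulation} hold with the stated $f_{od}$, and $f_{od}\le\alpha_{od}$ is exactly the demand constraint of \eqref{eq:path-formulation}. The capacity and budget constraints then transfer term by term. For the fleet-time and objective terms, interchanging sums gives
\[
\sum_{p} t_p f^p_{od}=\sum_{\tup{i,j}} T\tup{i,j} f^{od}_{ij}
\quad\text{and}\quad
\sum_p\beta^p_{od}f^p_{od}=\sum_{\tup{i,j}}\beta_{ij}f^{od}_{ij},
\]
because $t_p$ and $\beta^p_{od}$ are sums over the edges of $p$. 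So $\Phi$ maps feasible solutions to feasible solutions and preserves the objective.

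\textbf{Step 2 (construct $\Psi$).} For each commodity, fix a deterministic decomposition rule: a fixed edge ordering, greedy extraction of an $o$--$d$ walk carrying the bottleneck flow, repeated until $f_{od}$ is exhausted. The standard flow-decomposition theorem (Ahuja et al.~\cite{ahuja1988network}) guarantees this terminates with path flows summing to $f_{od}$, plus possibly a residual circulation. Any residual cycle that touches the support of an extracted walk is spliced into that walk. This produces a longer walk, which is still a member of $P_{od}$ because walks are admissible paths. Feasibility of $\Psi$'s output follows by reversing the identities of Step~1. Moreover $\Phi\circ\Psi=\mathrm{id}$, because the decomposition reproduces the edge flows exactly.

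\textbf{Step 3 (bijectivity).} Path decompositions are not unique, so $\Psi\circ\Phi$ is the identity only on the image of $\Psi$. I will therefore state the bijection as one between feasible solutions of \cref{form:link} and \emph{canonically decomposed} feasible solutions of \cref{form:path}. Every path solution is objective- and constraint-equivalent to its canonical representative $\Psi(\Phi(\cdot))$ by Step~1, so this is the correspondence asserted in the lemma.

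\textbf{Main obstacle.} I expect the hard step to be circulations in a link solution that are vertex-disjoint from every $o$--$d$ walk, or that arise when $f_{od}=0$. These carry $\beta$-profit and fleet time, yet no path variable can represent them, so Step~2 cannot absorb them. The first thing I would try is to show that such components can be accounted for, or eliminated, within the formulation itself. If that fails, I would make explicit the convention under which commodity flows carry no detached circulations, namely that each $\{f^{od}_{ij}\}$ is supported on $o$--$d$ walks. Under that convention, Steps~1--3 go through verbatim.
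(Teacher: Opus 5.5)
Your route is the paper's own: aggregation in one direction, flow decomposition in the other. The paper's proof simply asserts that decomposition returns $o$--$d$ paths and calls the correspondence a bijection. So the non-uniqueness you handle in Step~3 and the circulations you flag are real defects of the lemma that the paper never addresses.

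The gap in your proposal is that you leave the obstacle open, and your first remedy cannot succeed, because circulations change which objective values are attainable. Take nodes $o,d,a,b$, edges $(o,d),(a,b),(b,a)$, a single OD pair with $\alpha_{od}=1$, $\beta_e=T(e)=c_e=b_e=1$ for every edge, and $B=R=3$. The link solution with $x\equiv 1$ and $f_{od}=f^{od}_{od}=f^{od}_{ab}=f^{od}_{ba}=1$ satisfies every constraint and earns $3$. But $P_{od}=\{(o,d)\}$, so every path solution earns at most $1$. Hence there is no objective-preserving bijection, and not even equality of optimal values. The same happens whenever some directed cycle has positive total $\beta$ and the budgets are slack, e.g.\ a two-way street with positive fare-minus-cost margins. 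An extra hypothesis is therefore mandatory, not a convention of convenience. If only equality of optimal values is wanted, it suffices to take $P_{od}$ elementary and assume $\sum_{e\in C}\beta_e\le 0$ for every directed cycle $C$. Cycles can then be cancelled without loss before decomposing.

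Moreover, your hypothesis (no detached circulations) is too weak for the paper's binary incidence $z^{od,p}_{ij}\in\{0,1\}$, and Steps~1--2 fail as written.
\begin{itemize}
\item Aggregation gives $f^{od}_{ij}=\sum_p z^{od,p}_{ij}f^p_{od}\le f_{od}$ on every edge. So a link solution carrying unit flow on $o\to a\to d$ plus a circulation of value $2$ on $a\to c\to a$ has no preimage under $\Phi$, although its cycle touches the walk.
\item Your splicing has the same problem: absorbing a cycle of value $c$ into a walk of value $w<c$ requires traversing the cycle repeatedly, i.e.\ repeating edges.
\item The conservation claim in Step~1 survives node revisits but not edge repetitions. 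The binary incidence vector of $o\to a\to b\to a\to b\to d$ has net inflow at $a$.
\end{itemize}
The repair is to let $z^{od,p}_{ij}$ count traversals, with $t_p$ and $\beta^p_{od}$ summed over the edge sequence. Then Step~1 holds for all walks. A cycle of value $c$ touching a walk of value $w>0$ is absorbed as follows, with $k=\lfloor c/w\rfloor$: give weight $(k+1)w-c$ to the walk with $k$ inserted traversals and weight $c-kw$ to the walk with $k+1$. With that change, the right hypothesis is that every decomposition cycle lies in a weakly connected component of the commodity's support that carries positive $o$--$d$ flow; in particular, the flow vanishes when $f_{od}=0$. Your canonical-representative reading of Step~3 is then the correct form of the statement, since $\Phi$ itself is many-to-one.
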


\begin{proof}
Given a feasible solution~$\tup{\des,\flow}$ to the path-based formulation, define for each \gls{acr:od} pair and edge
\[
\tilde f^{od}_{ij}
:= \sum_{p\in P_{od}} z^{od,p}_{ij} f^p_{od}.
\]
These flows satisfy the flow-conservation constraints of \cref{form:link}, because for each~$\tup{o,d}$ the path flows form a feasible decomposition of~$f_{od} := \sum_{p\in P_{od}} f^p_{od}$ into~$o$-$d$ paths. 
Edge capacities, budgets, and fleet-time constraints coincide under the two formulations, since
\[
\sum_{\tup{o,d}\in \tilde D} \tilde f^{od}_{ij}
= \sum_{\tup{o,d}\in \tilde D} \sum_{p\in P_{od}}
  z^{od,p}_{ij} f^p_{od},
\]
and similarly for total travel time. 
The objective values match:
\[
\sum_{\tup{o,d}\in \tilde D} \sum_{\tup{i,j}\in \edges}
\beta_{ij} \tilde f^{od}_{ij}
=
\sum_{\tup{o,d}\in \tilde D} \sum_{p\in P_{od}}
\beta^p_{od} f^p_{od}.
\]

Conversely, given a feasible link-based solution~$\tup{\des,\tilde \flow}$, fix~$\tup{o,d}\in \tilde D$ and decompose the~$o$-$d$ flow induced by~$\{\tilde f^{od}_{ij}\}_{\tup{i,j}\in \edges}$ into a nonnegative combination of~$o$-$d$ paths via standard flow decomposition.
Assigning the corresponding path flows~$f^p_{od}$ recovers a feasible solution to
\cref{form:path} with the same edge loads and total profit. 
This defines a bijection between feasible solutions of the two formulations that preserves the objective value.
\end{proof}

In the remainder of the paper, we work with the path-based \cref{form:path}, which allows us to explicitly encode path-level constraints. However, \cref{form:path} is a \gls{acr:milp} with a potentially exponential number of path variables. 
Directly enumerating all paths~$P^{\mathrm{adm}}$ is infeasible on city-scale networks. To address this issue, we develop a \gls{acr:cg}-based algorithm that generates paths only when they can improve the objective. An overview of the complete algorithm is presented in \cref{fig:algo}. In the remainder of this section, we explain each step in detail.

\begin{figure*}[tb]
    \centering
    \includegraphics[width=0.95\linewidth]{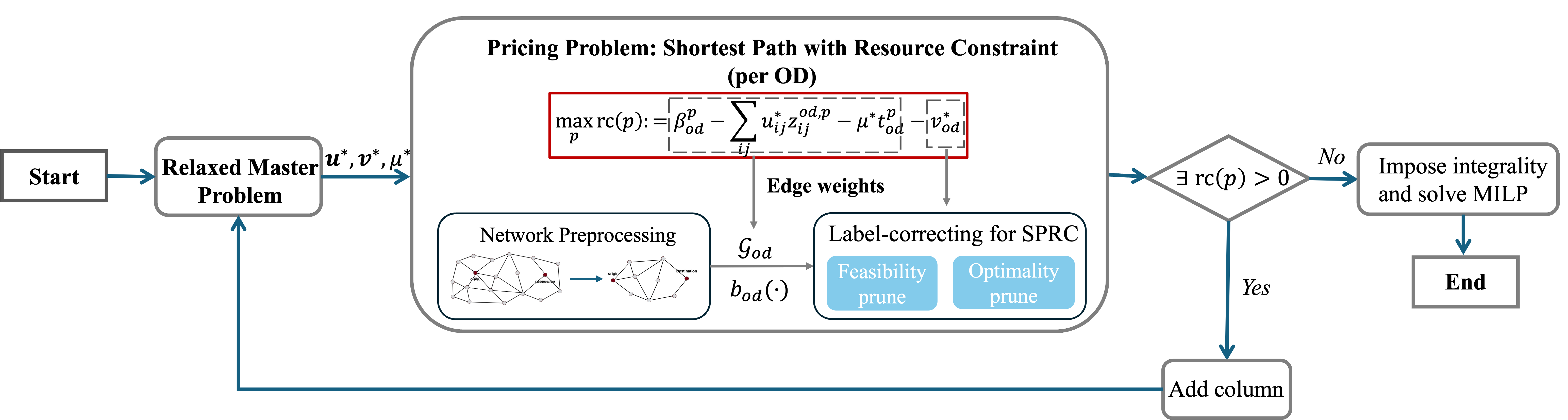}
    \caption{\gls{acr:cg}-based algorithm to solve the \gls{acr:andp} based on path-based formulation. The algorithm relies on a \gls{acr:cg}-based decomposition that decomposes \cref{form:path} into a master and pricing problem. The master and pricing problem is solved iteratively until the convergence condition is satisfied.}
    \label{fig:algo}
\end{figure*}

\subsection{LP Relaxation and Column-Generation Framework}
\label{subsec:cg-framework}

We start by relaxing~$x_{ij}\in\{0,1\}$ to $x_{ij}\in[0,1]$. 
The resulting \emph{master problem} in standard form is:

\begin{problem}[LP master problem]
\label{prob:mp}
\begin{equation}
\label{eq:mp}
\begin{aligned}
\max_{\des,\flow} \quad &
\sum_{\tup{o,d}\in \tilde D} \sum_{p\in P_{od}^{\mathrm{adm}}}
\beta^p_{od} f^p_{od} \\[0.3em]
\text{s.t.}\quad
& \sum_{p\in P_{od}^{\mathrm{adm}}} f^p_{od}
  \le \alpha_{od},
&& \forall \tup{o,d}\in \tilde D,\\[0.3em]
& \sum_{\tup{o,d}\in \tilde D} \sum_{p\in P_{od}^{\mathrm{adm}}}
  z^{od,p}_{ij} f^p_{od}
  - c_{ij} x_{ij} \le 0,
&& \forall \tup{i,j}\in \edges,\\[0.3em]
& \sum_{\tup{i,j}\in \edges} b_{ij} x_{ij} \le B,\\[0.3em]
& \sum_{\tup{o,d}\in \tilde D} \sum_{p\in P_{od}^{\mathrm{adm}}}
  t_p f^p_{od} \le R,\\[0.3em]
& 0 \le x_{ij} \le 1,\quad f^p_{od} \ge 0.
\end{aligned}
\end{equation}
\end{problem}

Let~$v_{od}$,~$u_{ij}$,~$\pi$,~$\mu$, and~$\delta_{ij}$ denote dual variables associated with the demand, capacity, budget, fleet-time, and upper-bound constraints, respectively. The dual of~\eqref{eq:mp} is:

\begin{problem}[Dual of LP master problem]
\label{prob:dual}
\begin{equation}
\label{eq:dual}
\begin{aligned}
\min_{\mu,v,\pi,u,\delta} \quad &
\sum_{\tup{o,d}\in\tilde D} v_{od}\alpha_{od}
+ B\pi
+ R\mu
+ \sum_{\tup{i,j}\in \edges} \delta_{ij} \\[0.3em]
\text{s.t.}\quad
& \sum_{\tup{i,j}\in \edges} u_{ij} z^{od,p}_{ij}
  + v_{od} + \mu t_p
  \ge \beta^p_{od},
\hspace{0.4em} \begin{aligned}[t]
   &\forall \tup{o,d}\in \tilde D,\\
   &\forall p\in P_{od}^{\mathrm{adm}},
\end{aligned}\\[0.3em]
& -c_{ij} u_{ij} + b_{ij} \pi + \delta_{ij} \ge 0,
\hspace{0.6em} \forall \tup{i,j}\in \edges,\\[0.3em]
& u_{ij},v_{od},\mu,\pi,\delta_{ij} \ge 0.
\end{aligned}
\end{equation}
\end{problem}

In a \gls{acr:cg} scheme, we do not include all paths~$P_{od}^{\mathrm{adm}}$ in the master problem. 
Instead, we maintain a restricted subset~$P_{od}^r \subseteq P_{od}^{\mathrm{adm}}$ and solve the \emph{restricted master problem} (\gls{acr:rmp}) over~$P^r := \bigcup_{(o,d)} P_{od}^r$, obtaining primal and dual optimal solutions. 
We then ask whether there exists any omitted path~$p \in P_{od}^{\mathrm{adm}}\setminus P_{od}^r$ with strictly positive reduced cost; such a path would improve the master problem objective and should be added to $P^r$ as a new column.

For a path-flow variable~$f^p_{od}$ the reduced cost is
\begin{equation}
\label{eq:reduced-cost}
\operatorname{rc}(p)
= \beta^p_{od}
  - \sum_{\tup{i,j}\in \edges} u_{ij}^\star z^{od,p}_{ij}
  - v_{od}^\star
  - \mu^\star t_p,
\end{equation}
where~$\tup{u^\star,v^\star,\mu^\star,\pi^\star,\delta^\star}$ is an optimal solution to the dual problem (\cref{prob:dual}) for the current \gls{acr:rmp}. 
A path with~$\operatorname{rc}(p) > 0$ can improve the objective and is therefore a promising candidate.
Identifying such paths leads to the \emph{pricing problem}, which we discuss next.

\subsection{Pricing as Shortest Path with Resource Constraint}
\label{subsec:sprc}
For a fixed \gls{acr:od} pair $\tup{o,d}$, the pricing problem consists of finding a path~$p\in P_{od}^{\mathrm{adm}}$ that maximizes the reduced cost~\eqref{eq:reduced-cost}. 
Because~$v_{od}^\star$ is constant for a given \gls{acr:od}, it can be dropped from the optimization without changing the optimizer. 
Using~$\beta^p_{od} = \sum_{e\in p} \beta_e$ and~$t_p = \sum_{e\in p} T(e)$, we can rewrite \eqref{eq:reduced-cost} as
\[
\operatorname{rc}(p)
= \sum_{e\in p} \bigl(\beta_e - u_e^\star - \mu^\star T(e)\bigr)
- v_{od}^\star,
\]
where we write~$u_e^\star$ for~$u_{ij}^\star$ when $e = \tup{i,j}$.

Dropping the constant~$-v_{od}^\star$ and changing sign, the pricing problem for~$\tup{o,d}$ is equivalent to the following shortest path problem with resource constraint:
\begin{equation}
\label{eq:sprc}
\begin{aligned}
\min_{p} \quad &
\sum_{e\in p} c_e \\[0.3em]
\text{s.t.}\quad
& p \in P_{od}^{\mathrm{adm}},\\[0.2em]
& \text{$p$ is elementary},
\end{aligned}
\end{equation}
where the edge cost
\[
c_e := \mu^\star T(e) + u_e^\star - \beta_e
\]
encodes the dual penalties associated with travel time and capacity.

\begin{lemma}[Equivalence to SPRC]
\label{prop:sprc}
For each OD pair~$\tup{o,d}$, the pricing problem that maximizes~$\operatorname{rc}(p)$ over~$P_{od}^{\mathrm{adm}}$ is equivalent to the elementary \gls{acr:sprc}~\eqref{eq:sprc} on~$\graph$, where edge~$e$ has cost~$c_e$ and consumes resource~$T(e)$, and the total available resource is ~$M_{od}$. 
If~$p^\star$ is an optimal solution to~\eqref{eq:sprc}, then~$p^\star$ is also an optimal pricing path, with
\[
\operatorname{rc}(p^\star)
= - \sum_{e\in p^\star} c_e - v_{od}^\star.
\]
\end{lemma}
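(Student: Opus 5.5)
The proof is a direct rewriting of the reduced cost, followed by the observation that an affine decreasing transformation of an objective preserves the set of optimizers over a fixed feasible set.

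\textbf{Step 1: rewrite the reduced cost.} Fix $\tup{o,d}\in\tilde D$. For any $p\in P_{od}^{\mathrm{adm}}$, substitute into \eqref{eq:reduced-cost} the identities $\beta^p_{od}=\sum_{e\in p}\beta_e$, $t_p=\sum_{e\in p}T(e)$, and $\sum_{\tup{i,j}\in\edges}u^\star_{ij}z^{od,p}_{ij}=\sum_{e\in p}u^\star_e$. The last identity holds because $z^{od,p}_{ij}$ is the indicator of edge membership in $p$, and each edge of $p$ appears exactly once. This gives
\[
\operatorname{rc}(p)=-\sum_{e\in p}c_e-v^\star_{od},
\qquad c_e=\mu^\star T(e)+u^\star_e-\beta_e .
\]
The displayed formula for $\operatorname{rc}(p^\star)$ is then this identity evaluated at $p=p^\star$.

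\textbf{Step 2: match the feasible sets.} I must check that the feasible set of \eqref{eq:sprc} is exactly $P_{od}^{\mathrm{adm}}$.

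First, the resource constraint. A path $p$ lies in $P_{od}^{\mathrm{adm}}$ iff it runs from $o$ to $d$ and $T(p)=\sum_{e\in p}T(e)\le M_{od}$. This is precisely the resource constraint with consumption $T(e)$ per edge and capacity $M_{od}$.

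Second, elementarity. I argue it is implicit in the path sets $P_{od}$ used in \cref{form:path}. If nonelementary walks were allowed, the incidence $z^{od,p}_{ij}$ would be binary while the walk uses the edge several times, so $\beta^p_{od}$ and $t_p$ would no longer match the edge-sum expressions. Also, by \cref{lem:link-path-equivalence}, flow decomposition produces only simple paths. Under this reading the two feasible sets coincide.

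\textbf{Step 3: transfer optimality.} The map $p\mapsto \operatorname{rc}(p)$ equals $-g(p)-v^\star_{od}$, where $g(p)=\sum_{e\in p}c_e$ is the SPRC objective. This is a strictly decreasing affine transformation of $g$, with $v^\star_{od}$ constant for the fixed OD pair. Over a common feasible set, maximizers of $\operatorname{rc}$ therefore coincide with minimizers of $g$, so any optimal $p^\star$ of \eqref{eq:sprc} is an optimal pricing path, and conversely.

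\textbf{Main obstacle.} The only delicate point is the elementarity issue in Step 2. I would address it by stating explicitly that $P_{od}$ consists of simple paths, as implied by the binary incidence $z$ and by \cref{def:path}. With that convention the equivalence is immediate.
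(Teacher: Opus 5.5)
Your proposal is correct and follows the paper's own argument: decompose $\beta^p_{od}$, $t_p$ and the capacity-dual term into edge sums, drop the constant $v^\star_{od}$, and flip the sign. You are also more explicit than the paper about matching the feasible sets. One small correction to Step~2: elementarity is not implied by \cref{def:path}, which permits repeated nodes, nor by the binary incidence $z^{od,p}_{ij}$, which only rules out repeated edges and not repeated nodes. It should therefore be stated as a modeling convention on $P_{od}^{\mathrm{adm}}$, as the paper does in the paragraph right after the lemma, rather than derived.
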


\begin{proof}
The result follows directly from the decomposition of~$\beta^p_{od}$ and~$t_p$ into edge contributions and the expression~\eqref{eq:reduced-cost}. 
Dropping the constant~$v_{od}^\star$ does not change the maximizer; changing the sign transforms maximization into minimization with edge costs~$c_e$.
\end{proof}

Because some~$c_e$ can be negative, negative cycles may exist. 
To prevent ill-posedness due to infinitely improving cycles, we restrict attention to \emph{elementary} paths, which is also consistent with realistic routing in road networks, as cycles incur detours that travelers would not consider~\cite{prato2009route}.

\begin{definition}[Elementary path]
\label{def:elementary-path}
A path with node-sequence representation~$p := \tup{v_1,\dots,v_\ell}$ is \emph{elementary} if all nodes along the path are distinct.
\end{definition}

We therefore need an exact algorithm for the elementary \gls{acr:sprc}. 
We follow the label-correcting paradigm introduced in~\cite{feillet2004exact} and adapt it to leverage the structure of road networks via \gls{acr:od}-specific preprocessing and A$^\star$-style bounds.

\subsection{Pricing Algorithm: Preprocessing and Label-Correcting Search}
\label{subsec:pricing-alg}

The \gls{acr:sprc} for each OD pair~$\tup{o,d}$ is solved in
two steps:
\begin{enumerate}
    \item \textbf{\gls{acr:od}-specific network preprocessing}, which
    prunes nodes and edges that cannot lie on any feasible
    path with~$T(p)\le M_{od}$.
    \item \textbf{Label-correcting search} on the pruned
    network, using the preprocessing results as admissible
    bounds in an A$^\star$-like selection rule.
\end{enumerate}

\paragraph*{1) \gls{acr:od}-specific network preprocessing}
For a given~$\tup{o,d}\in \tilde D$, we run Dijkstra's algorithm forward from~$o$ and backward from~$d$ (on the reversed network) using travel time~$T(e)$ as edge weights. 
Let~$t_{ov}$ denote the shortest-path travel time from~$o$ to node~$v$, and~$t_{vd}$ the shortest-path time from~$v$ to~$d$. 
We then prune any node~$v$ for which~$t_{ov} + t_{vd} > M_{od}$ and remove all incident edges.

\begin{algorithm}
\small
\caption{Network pre-processing}
\label{algo:preprocess}
\begin{algorithmic}[1]
\Require Base network $\mathcal{G}=(\mathcal{V},\mathcal{E})$ with edge travel times $T(e)$; OD pair $(o,d) \in \tilde{D}$; travel time thresholds $M_{od}$
\Ensure OD-specific processed graphs $\mathcal{G}_{od}=(\mathcal{V}_{od},\mathcal{E}_{od})$ and bounds $\{b_{od}(v)=t_{vd}\}_{v\in\mathcal{V}_{od}}$

\State $\mathcal{G}^{\mathsf{rev}} \gets \textsc{ReverseGraph}(\mathcal{G})$
    \State \textcolor{blue}{// Forward search from $o$}
    \State Compute $t_{o\cdot} \gets \textsc{Dijkstra}(\mathcal{G},\, o)$
    \State $\mathcal{R}_F^{od} \gets \{v\in\mathcal{V}: t_{ov}\le M_{od}\}$ \Comment{Prune after forward search}

    \State \textcolor{blue}{// Backward search to $d$ via reversed graph}
    \State $t_{\cdot d} \gets \textsc{Dijkstra}(\mathcal{G}^{\mathsf{rev}},\, d)$
    \State Set $t_{vd}\gets \tilde t_{dv}$ for all $v\in\mathcal{R}_F^{od}$

    \State \textcolor{blue}{// Prune nodes and edges inconsistent with the time budget}
    \State $\mathcal{V}_{od} \gets \{v\in\mathcal{R}_F^{od}: t_{ov}+t_{vd}\le M_{od}\}$
    \State $\mathcal{E}_{od} \gets \{(u,v)\in\mathcal{E}: u,v\in\mathcal{V}_{od}\}$
    \State $b_{od}(v)\gets t_{vd}$ for all $v\in\mathcal{V}_{od}$

\State \textbf{return } $\mathcal{G}_{od}$, $b_{od}(\cdot)$
\end{algorithmic}
\end{algorithm}

\begin{lemma}[Preservation of feasible paths]
\label{lem:preprocess-preserve}
The preprocessing step of \cref{algo:preprocess} preserves all elementary paths~$p$ from~$o$ to~$d$ satisfying~$T(p)\le M_{od}$.
\end{lemma}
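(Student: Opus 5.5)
The plan is to fix an arbitrary elementary path $p=(v_1,\dots,v_\ell)$ with $v_1=o$, $v_\ell=d$ and $T(p)\le M_{od}$. We then show that every node of $p$ survives both pruning steps of \cref{algo:preprocess}, and that every edge of $p$ lies in $\mathcal{E}_{od}$. Elementarity plays no real role beyond guaranteeing that $p$ is a finite simple sequence, so the argument works for any feasible path.

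Fix an index $k\in\{1,\dots,\ell\}$ and split $p$ at $v_k$ into a prefix $p_{1:k}$ (from $o$ to $v_k$) and a suffix $p_{k:\ell}$ (from $v_k$ to $d$). For $k=1$ or $k=\ell$ the corresponding piece is the trivial path with travel time zero. Since $T(e)>0$ and path travel time is additive over edges, we have $T(p)=T(p_{1:k})+T(p_{k:\ell})$. Dijkstra's algorithm with nonnegative weights $T(e)$ returns exact shortest travel times, so $t_{ov_k}\le T(p_{1:k})$. Likewise, because a $v_k$-to-$d$ path in $\mathcal{G}$ is a $d$-to-$v_k$ path in $\mathcal{G}^{\mathsf{rev}}$ with the same time, the backward search gives $t_{v_kd}\le T(p_{k:\ell})$.

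It follows that $t_{ov_k}\le t_{ov_k}+t_{v_kd}\le T(p)\le M_{od}$. Hence $v_k\in\mathcal{R}_F^{od}$, and then $v_k\in\mathcal{V}_{od}$. For each edge $(v_k,v_{k+1})$ of $p$, both endpoints lie in $\mathcal{V}_{od}$ and the edge belongs to $\mathcal{E}$. By the definition of $\mathcal{E}_{od}$ as all edges of $\mathcal{E}$ between surviving nodes, the edge is therefore retained. So $p$ is a path in $\mathcal{G}_{od}$, with the same node sequence, so it is still elementary and has the same travel time.

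The main obstacle is only bookkeeping. One point is the reversed-graph step: I need to justify that the distance from $d$ to $v$ in $\mathcal{G}^{\mathsf{rev}}$, written $\tilde t_{dv}$, equals $t_{vd}$ in $\mathcal{G}$. This holds because edge reversal is a bijection between $v$–$d$ paths in $\mathcal{G}$ and $d$–$v$ paths in $\mathcal{G}^{\mathsf{rev}}$, and it preserves the edge weights $T(e)$. A second point is that the forward pruning in line 4 happens before $t_{vd}$ is assigned; this is harmless because the inequality already shows $t_{ov_k}\le M_{od}$.
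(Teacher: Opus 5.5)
Your proof is correct and follows the same route as the paper's: you split $p$ at each of its nodes $v$, use additivity of travel time so that the prefix and suffix times sum to $T(p)\le M_{od}$, and conclude that $v$ passes the pruning test. You also make explicit two steps the paper leaves implicit: first, that the Dijkstra distances $t_{ov}$ and $t_{vd}$ are bounded by the prefix and suffix times, which is what actually ties the path to the pruning criterion $t_{ov}+t_{vd}\le M_{od}$; and second, that every edge of $p$ survives because $\mathcal{E}_{od}$ keeps all edges of $\mathcal{E}$ between retained nodes.
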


\begin{proof}
Let~$p$ be any feasible elementary path from~$o$ to~$d$. By feasibility, we know that the total travel time of the path~$T(p) \le M_{od}$. Let~$T(o\leadsto v)$ and~$T(v\leadsto d)$ denote the travel time of the corresponding subpaths of~$p$. 
Summing yields
\[
T(o\leadsto v) + T(v\leadsto d)
= T(p) \le M_{od}.
\]
Hence~$v$ is not pruned. 
Since this holds for every node on~$p$, the entire path is preserved.
\end{proof}

\paragraph*{2) Label-correcting algorithm for SPRC}
On the processed graph~$\graph_{od}$ we perform a
label-correcting search for the elementary \gls{acr:sprc}. 
A label represents a partial path from~$o$ to some node~$v$.

\begin{definition}[Label]
\label{def:label}
Fix~$\tup{o,d}$ and processed graph~$\graph_{od}=\tup{\nodes_{od},\edges_{od}}$.
A \emph{label} is a quadruple~$\ell = \tup{v, c, t, \nodes_{\mathrm{vis}}}$ where~$v\in \nodes_{od}$ is the current node,~$c\in\mathbb{R}$ is the accumulated edge cost from~$o$ to~$v$,~$t\in\mathbb{R}_{\ge 0}$ is the accumulated travel time from~$o$ to~$v$, and~$\nodes_{\mathrm{vis}}\subseteq \nodes_{od}$ is the set of nodes visited so far along the partial path, used to enforce elementarity.
\end{definition}

Only labels at the same node are comparable. They are compared using a dominance relation.

\begin{definition}[Dominated label]
\label{def:dominated-label}
Fix~$\tup{o,d}$ and a node~$v\in \nodes_{od}$. 
Let~$\ell_1=\tup{v,c_1,t_1,\nodes^1_{\mathrm{vis}}}$ and~$\ell_2=\tup{v,c_2,t_2,\nodes^2_{\mathrm{vis}}}$ be two labels at~$v$. 
We say that~$\ell_1$ \emph{dominates}~$\ell_2$ if
\[
c_1 \le c_2,\quad
t_1 \le t_2,\quad
\nodes^1_{\mathrm{vis}} \subseteq \nodes^2_{\mathrm{vis}},
\]
with at least one inequality strict. 
In this case~$\ell_2$ is called dominated by $\ell_1$.
\end{definition}

Given a label~$\ell=\tup{v,c,t,\nodes_{\mathrm{vis}}}$, we extend it along an outgoing edge~$e=\tup{v,v'}\in \edges_{od}$ to obtain
\[
\ell' = \bigl(v',\, c + c_e,\,
      t + T(e),\, \nodes_{\mathrm{vis}}\cup\{v'\}\bigr).
\]
The extended label is discarded if any of the following holds:
\begin{itemize}
    \item \textbf{Non-elementary}: $v'\in \nodes_{\mathrm{vis}}$.
    \item \textbf{Resource infeasible}:
    $t + T(e) + t_{v'd} > M_{od}$ (no completion to a
    feasible $o$-$d$ path exists).
    \item \textbf{Dominated}: $\ell'$ is dominated by an
    existing label at node $v'$.
\end{itemize}

We maintain for each node~$v$ a set~$B(v)$ of non-dominated labels and a priority queue~$Q$ keyed by~$t + b_{od}(v)$ (an A$^\star$-like evaluation). 
The algorithm terminates when no labels remain to be processed. 
The cheapest label at~$d$ then corresponds to an optimal solution of the \gls{acr:sprc}.

\begin{algorithm}
\small
\caption{Pricing via RCSP: label-correcting with preprocessing bounds}
\label{algo:rcsp}
\begin{algorithmic}[1]
\Require Base network $\mathcal{G}$; \gls{acr:od} pair $(o,d) \in \tilde{D}$; travel time thresholds $\{M_{od}\}_{(o,d)\in\tilde{\mathcal{D}}}$;
OD-specific processed graph $\mathcal{G}_{od}=(\mathcal{V}_{od},\mathcal{E}_{od})$ and bounds $b_{od}(\cdot)$ from \Cref{algo:preprocess},
dual-dependent edge weights $w_e=\mu T(e)+u_e-\beta_e$, dual variable $v^*_{od}$
\Ensure Optimal path $p^*$ and corresponding reduced cost $\text{rc}(p^*)$

    \State $B(v)\gets\emptyset \quad \forall v\in\mathcal{V}_{od}$
    \State $\mathcal{Q}\gets\emptyset$

    \State $\ell_0 \gets (o,\;0,\;0,\;\{o\})$ \Comment{label at original node}
    \State $B(o)\gets\{\ell_0\}$
    \State Push $\ell_0$ into $\mathcal{Q}$ with key $k(\ell_0)\gets 0+t_{od}$

    \While{$\mathcal{Q}\neq\emptyset$}
        \State $\ell \gets \mathsf{pop\_min}(\mathcal{Q})$

        \For{each edge $e=(\ell.v,v')\in\mathcal{E}_{od}$}
            \State \textcolor{blue}{//Prune if not elementary path}
            \If{$v'\in \ell.\mathcal{V}_{\mathsf{vis}}$} \State \textbf{continue} \EndIf
            \State \textcolor{blue}{//Prune if exceeds travel time threshold}
            \If{$\ell.t + T(e) + t_{v'd} > M_{od}$} \State \textbf{continue} \EndIf

            \State $\ell' \gets \Bigl(v',\;\ell.c+w_e,\;\ell.t+T(e),\;\ell.\mathcal{V}_{\mathsf{vis}}\cup\{v'\}\Bigr)$

            \State \textcolor{blue}{//Prune if dominated}
            \If{$\exists \tilde{\ell}\in B(v')$ that dominates $\ell'$} 
                \State \textbf{continue}
            \EndIf
            \State Remove from $B (v')$ all labels dominated by $\ell'$
            \State Insert $\ell'$ into $B(v')$

            \If{$v'\neq d$}
                \State Push $\ell'$ into $\mathcal{Q}$ with key $k(\ell')\gets \ell'.t+t_{v'd}$
            \EndIf
        \EndFor
    \EndWhile

    \State \textcolor{blue}{//Get label with lowest cost}
    \State $\ell^\star \gets \arg\min_{\ell \in B(d)} \; \ell.c$
    \State Reconstruct the corresponding path $p^\star$ from $\ell^\star$
    \State $\text{rc}(p^*)=-\ell^*.c-v_{od}$


\State \textbf{return } $p^*,\,\text{rc}(p^*)$
\end{algorithmic}
\end{algorithm}

\begin{lemma}[Validity of dominance pruning]
\label{lem:dominance}
Let~$\ell_1$ and~$\ell_2$ be two labels at the same node~$v$ for a fixed OD pair~$\tup{o,d}$, with~$\ell_1$ dominating~$\ell_2$ in the sense of \cref{def:dominated-label}. 
Then discarding~$\ell_2$ cannot eliminate any path that is optimal for the \gls{acr:sprc}.
\end{lemma}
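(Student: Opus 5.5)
The plan is a standard exchange argument on completions. Fix $\tup{o,d}$ and let $\ell_1=\tup{v,c_1,t_1,\nodes^1_{\mathrm{vis}}}$ dominate $\ell_2=\tup{v,c_2,t_2,\nodes^2_{\mathrm{vis}}}$. Let $p_1$ and $p_2$ be the partial $o$--$v$ paths represented by the two labels. Every complete path that could be generated from $\ell_2$ has the form $p_2\oplus q$, where $q=\tup{v=w_0,w_1,\dots,w_k=d}$ is a $v$--$d$ path in $\graph_{od}$. We may assume $p_2\oplus q$ is elementary and satisfies $T(p_2\oplus q)\le M_{od}$, since otherwise it is not a feasible \gls{acr:sprc} solution and nothing is lost. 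The goal is to show that $p_1\oplus q$ is also feasible and no more costly. Then every optimal path descending from $\ell_2$ is matched by an optimal path descending from $\ell_1$, in the sense that the optimal \gls{acr:sprc} value is still attained.

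The feasibility and cost checks proceed in three steps.

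\emph{Elementarity.} The nodes $w_1,\dots,w_k$ are not in $\nodes^2_{\mathrm{vis}}$ because $p_2\oplus q$ is elementary. Since $\nodes^1_{\mathrm{vis}}\subseteq\nodes^2_{\mathrm{vis}}$, they are not in $\nodes^1_{\mathrm{vis}}$ either. The $w_i$ are distinct among themselves, so $p_1\oplus q$ is elementary.

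\emph{Resource feasibility.} We have $T(p_1\oplus q)=t_1+T(q)\le t_2+T(q)=T(p_2\oplus q)\le M_{od}$. The same inequality, applied at each prefix $t_1+T(w_0\leadsto w_i)+t_{w_i d}\le t_2+T(w_0\leadsto w_i)+t_{w_id}$, shows that none of the extensions of $\ell_1$ along $q$ is discarded by the resource test of \cref{algo:rcsp}. The path stays within $\graph_{od}$ by \cref{lem:preprocess-preserve}.

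\emph{Cost.} Costs are additive over edges, so $\sum_{e\in p_1\oplus q}c_e=c_1+\sum_{e\in q}c_e\le c_2+\sum_{e\in q}c_e$.

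Hence $p_1\oplus q$ is a feasible elementary path with cost at most that of $p_2\oplus q$. If $p_2\oplus q$ were optimal, $p_1\oplus q$ would also be optimal. Note that this holds even though the edge costs $c_e$ may be negative, since only additivity is used.

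The main obstacle is making the statement robust to the algorithm's dynamics. The label for $p_1\oplus q$ may itself be dominated and removed later by some third label. To handle this, I would argue by induction along $q$ that at each node $w_i$ some retained label dominates or equals the extension of $\ell_1$. Dominance is transitive, since the componentwise $\le$ and $\subseteq$ relations compose. So at every step there is a surviving label $\tilde\ell$ at $w_i$ whose extension by the rest of $q$ is, by the argument above, feasible and no costlier. Applying the induction at $w_k=d$ gives a label in $B(d)$ with cost at most that of $p_2\oplus q$. Therefore the minimum over $B(d)$ returned by \cref{algo:rcsp} is still optimal.
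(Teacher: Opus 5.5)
Your proof is correct and uses the same exchange argument as the paper: a feasible completion $q$ of $\ell_2$, appended to $\ell_1$, stays elementary since $\nodes^1_{\mathrm{vis}}\subseteq\nodes^2_{\mathrm{vis}}$, stays within $M_{od}$ since $t_1\le t_2$, and is no costlier since $c_1\le c_2$ and costs are edge-additive. Your extra induction along $q$ via transitivity of dominance, which covers extensions that are themselves pruned later, goes beyond the paper (it leaves this point implicit in \cref{lem:label-optimality}); one step should be stated explicitly: the prefix resource test passes because $t_{w_i d}$, being a shortest travel time, is at most the travel time of the suffix of $q$ from $w_i$ to $d$, so the tested quantity is at most $t_2+T(q)\le M_{od}$.
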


\begin{proof}
Consider any feasible completion of~$\ell_2$ to a path from~$v$ to~$d$ that respects the elementarity and time constraints.
Applying the same sequence of extensions to~$\ell_1$ yields another feasible completion, since~$\ell_1$ uses no more time and visits a subset of the nodes of~$\ell_2$. 
Because~$c_1 \le c_2$, the resulting full path from~$\ell_1$ has cost no greater than the path from~$\ell_2$. 
Thus~$\ell_2$ cannot lead to a strictly better feasible solution than~$\ell_1$, and can be safely discarded.
\end{proof}

\begin{lemma}[Optimality of \cref{algo:rcsp}]
\label{lem:label-optimality}
If there exists an optimal elementary path solving the \gls{acr:sprc}~\eqref{eq:sprc}, \cref{algo:rcsp} returns one such path.
\end{lemma}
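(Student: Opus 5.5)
We argue in three steps: finiteness of the search, survival of an optimal label, and correct extraction. The argument is an invariant that combines \cref{lem:preprocess-preserve} and \cref{lem:dominance}.

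\textbf{Termination.} Every label is elementary on the finite graph $\graph_{od}$, because the non-elementary check discards any extension to an already visited node. Hence each label corresponds to a distinct elementary partial path from $o$. There are finitely many such paths, and each is generated at most once, since a label is created only by extending its unique predecessor label along one edge. So only finitely many labels are ever pushed into $\mathcal{Q}$, and the while loop terminates.

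\textbf{An optimal label survives.} Let $p^\star=(o=v_1,\dots,v_k=d)$ be an optimal elementary path with $T(p^\star)\le M_{od}$, chosen among all optimal paths to be minimal with respect to the dominance order on prefixes. We may do this because the set of labels is finite.

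By \cref{lem:preprocess-preserve}, every node and edge of $p^\star$ lies in $\graph_{od}$. The resource test never cuts a prefix of $p^\star$. For prefix $v_1\dots v_m$, the test requires $t+T(e)+t_{v_{m+1}d}\le M_{od}$, and this holds because $t_{v_{m+1}d}$ is a lower bound on the remaining travel time $T(v_{m+1}\leadsto d)$ along $p^\star$.

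We claim that for each $m$, some label at $v_m$ that is either the prefix label of $p^\star$ or dominates it is eventually inserted into $B(v_m)$ and popped from $\mathcal{Q}$. The labels at $d$ are the exception: they are only inserted into $B(d)$, never pushed. We prove the claim by induction on $m$.

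\emph{Base case.} The label $\ell_0$ is pushed at line 5.

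\emph{Inductive step.} Suppose a label $\ell$ at $v_m$ that weakly dominates the prefix label is popped. Extending $\ell$ along $(v_m,v_{m+1})$ passes both the elementarity and resource checks. The elementarity check passes because $\ell.\nodes_{\mathrm{vis}}$ is a subset of the prefix's visited set and does not contain $v_{m+1}$. The resource check passes because $\ell.t$ is at most the prefix time. The extension yields a label that weakly dominates the $(m+1)$-prefix label.

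If this new label is rejected as dominated, the dominating label in $B(v_{m+1})$ also weakly dominates the prefix. That label was inserted and, if $v_{m+1}\ne d$, pushed, so it will be popped. A label that is later removed from $B(v')$ by a newer dominating label is replaced by one that dominates it, because dominance is transitive.

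The delicate point is whether a label removed from $B$ but still sitting in $\mathcal{Q}$ is popped at all. I would treat such labels as still processed. The algorithm as written does not delete them from $\mathcal{Q}$, and processing extra labels only adds candidates, so this does not break the argument.

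Transitivity together with \cref{lem:dominance} shows that dominance preserves feasible completions and never increases cost. Therefore at termination $B(d)$ contains a label whose cost is at most $\sum_{e\in p^\star}c_e$.

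\textbf{Correct extraction.} Every label in $B(d)$ encodes a feasible elementary $o$-$d$ path. Since its cost is at least the optimum, the $\arg\min$ over $B(d)$ has exactly the optimal cost. Its reconstructed path is therefore optimal, and \cref{prop:sprc} gives $\operatorname{rc}(p^\star)=-\ell^\star.c-v^\star_{od}$.

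\textbf{Main obstacle.} The hardest step is the survival invariant. Dominance can delete labels after they were pushed, and the priority key $t+b_{od}(v)$ is unrelated to cost. The argument must therefore not rely on processing order, only on the fact that every inserted non-$d$ label is eventually popped. The strict-inequality clause in \cref{def:dominated-label} also matters: two labels that are equal in every component do not dominate each other, so both may survive. This is harmless.
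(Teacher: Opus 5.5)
Your proof is correct and follows the same route as the paper's: preprocessing keeps $p^\star$ (\cref{lem:preprocess-preserve}), the elementarity and resource tests never cut its prefixes, and dominance pruning is safe. Where the paper makes a one-line appeal to \cref{lem:dominance}, you make it rigorous with the prefix induction on weakly dominating labels, transitivity, and the observation that labels removed from $B(\cdot)$ are still popped from $\mathcal{Q}$; you also add the termination argument the paper omits, and your choice of $p^\star$ ``minimal with respect to the dominance order on prefixes'' is never used and can simply be dropped.
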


\begin{proof}
By \cref{lem:preprocess-preserve}, preprocessing preserves all feasible paths. 
During the label-correcting procedure, labels are discarded only if they violate elementarity, violate the resource bound, or are dominated by another label.
The first two cases correspond to partial paths that cannot be extended into feasible solutions. 
By \cref{lem:dominance}, dominated labels cannot yield an optimal solution. 
Hence at least one label corresponding to an optimal path remains in the label sets, and the algorithm selects it at termination.
\end{proof}

\subsection{Robust extension under uncertainty}
\label{subsec:robust}
The formulation and algorithm presented so far assume a deterministic setting. 
In practice, uncertainty is ubiquitous in transportation systems, especially in travel times and demand~\cite{daganzo1977stochastic}.
In this subsection, we show that the proposed framework can incorporate standard robust counterparts for these uncertainties without changing the overall decomposition logic, and with minimal modifications to the inputs of the pricing problem.

We adopt the framework of \gls{acr:ro}, which is one of the most widely used approaches for handling parameter uncertainty in optimization; see, e.g.,~\cite{bertsimas2011theory,ben2002robust}.
\gls{acr:ro} seeks decisions that remain feasible for all realizations of uncertain parameters within a prescribed uncertainty set.

We focus on \emph{box uncertainty}, which is among the most common uncertainty sets in \gls{acr:ro}. 
We consider uncertainty in travel times and in \gls{acr:od} demand, and show that the robust \gls{acr:andp} preserves the structure exploited by the \gls{acr:cg}-based algorithm.

\begin{definition}[Box uncertainty]
Let~$\tilde{a}$ be an uncertain scalar parameter with nominal value~$\bar{a}$. 
Under box uncertainty,~$\tilde{a}$ belongs to
the set
\[
\mathcal{U}
:= \{\tilde{a} : |\tilde{a} - \bar{a}| \le \rho\},
\]
where~$\rho \ge 0$ is the maximum deviation from the nominal value. For a vector of uncertain parameters~$\tilde{\boldsymbol a} \in \mathbb{R}^n$ with nominal value~$\bar{\boldsymbol a} \in \mathbb{R}^n$ and deviation~$\boldsymbol\Delta \in \mathbb{R}^n$, we write~$\tilde{\boldsymbol a} = \bar{\boldsymbol a} + \boldsymbol\Delta$
with elementwise bounds~$|\Delta_i| \le \rho_i$ for~$i=1,\dots,n$. When~$\rho_i=0$ for~$i=1,\dots,n$, the uncertainty counterpart recovers the deterministic formulation.
\end{definition}

\subsubsection
{Uncertainty in travel times}
We first consider uncertainty in link travel times, which induces uncertainty in path travel times.

\begin{definition}[Uncertainty in link travel time]
\label{def:unc-link-time}
Let~$\bar{T}(e)$ denote the nominal travel time on edge~$e \in \edges$. Collect all nomial edge travel times in a vector~$\bar{\boldsymbol T} \in \mathbb{R}^{|\edges|}_{\ge 0}$.
Under box uncertainty, the real edge travel times are represented as
\[
\tilde{\boldsymbol T}
= \bar{\boldsymbol T} + \boldsymbol\Delta^T,
\quad
|\Delta^T_e| \le \rho^T_e, \ \forall e\in \edges,
\]
where~$\boldsymbol\Delta^T \in \mathbb{R}^{|\edges|}$ is the uncertainty vector and~$\boldsymbol\rho^T \in \mathbb{R}^{|\edges|}_{\ge 0}$ bounds its entries.
\end{definition}

Path travel times then inherit uncertainty through the incidence structure of the network.

\begin{definition}[Induced uncertainty in path travel time]
\label{def:unc-path-time}
Let~$P^{\mathrm{adm}}$ be the set of admissible paths and~$\bar{\boldsymbol t}_p \in \mathbb{R}^{|P^{\mathrm{adm}}|}_{\ge 0}$ the associated vector of nominal path travel times, with entries~$\bar{t}_p = \sum_{e\in p} \bar{T}(e)$. 
Let~$A \in \{0,1\}^{|P^{\mathrm{adm}}|\times |\edges|}$ be the path–edge incidence matrix, where~$A_{pe}=1$ if edge~$e$ lies on path~$p$ and~$A_{pe}=0$ otherwise. 
Under \cref{def:unc-link-time},
\[
\tilde{\boldsymbol t}_p
= A \tilde{\boldsymbol T}
= A\bar{\boldsymbol T} + A\boldsymbol\Delta^T
= \bar{\boldsymbol t}_p + \boldsymbol\Delta^p,
\]
where~$\boldsymbol\Delta^p := A\boldsymbol\Delta^T$ is the induced path time uncertainty and each component~$\Delta^p_p$ satisfies~$|\Delta^p_p| \le \rho^p_p := \sum_{e\in p} \rho^T_e$.
\end{definition}

Thus box uncertainty on link travel times induces box uncertainty on path travel times, with path-wise radii~$\rho^p_p$ obtained by summing link radii along the path.

In the path-based formulation (\cref{form:path}), the fleet-time budget appears as
\begin{equation}
\label{eq:time-constraint-det}
\sum_{(o,d)\in \tilde D}
\sum_{p\in P_{od}^{\mathrm{adm}}}
t_p f^p_{od} \le R.
\end{equation}
Collect all path flows into a vector~$\boldsymbol f \in \mathbb{R}^{|P^{\mathrm{adm}}|}_{\ge 0}$ and all path travel times into~$\boldsymbol t_p$. 
Then \eqref{eq:time-constraint-det} can be written compactly as
\begin{equation}
\label{eq:time-constraint-vector}
\boldsymbol t_p^\top \boldsymbol f \le R.
\end{equation}

Under box uncertainty, the true path travel-time vector is~$\tilde{\boldsymbol t}_p = \bar{\boldsymbol t}_p + \boldsymbol\Delta^p$ with~$|\Delta^p_i| \le \rho^p_i$. Robust feasibility requires that \eqref{eq:time-constraint-vector} hold for all realizations~$\tilde{\boldsymbol t}_p$ in the box:
\[
(\bar{\boldsymbol t}_p + \boldsymbol\Delta^p)^\top \boldsymbol f
\le R
\quad \forall \boldsymbol\Delta^p
\text{ with } |\Delta^p_i|\le \rho^p_i.
\]
This is equivalent to
\begin{align}
\max_{|\boldsymbol\Delta^p|\le \boldsymbol\rho^p}
(\bar{\boldsymbol t}_p + \boldsymbol\Delta^p)^\top \boldsymbol f
&\le R \nonumber\\
\Leftrightarrow\quad
\bar{\boldsymbol t}_p^\top \boldsymbol f
+ \max_{|\boldsymbol\Delta^p|\le \boldsymbol\rho^p}
(\boldsymbol\Delta^p)^\top \boldsymbol f
&\le R. \label{eq:ro-time-1}
\end{align}
Since~$\boldsymbol f \ge 0$, the inner maximization separates componentwise:
\[
\max_{|\Delta^p_i|\le \rho^p_i} \Delta^p_i f_i
= \rho^p_i f_i,
\]
and hence
\[
\max_{|\boldsymbol\Delta^p|\le \boldsymbol\rho^p}
(\boldsymbol\Delta^p)^\top \boldsymbol f
= \sum_i \rho^p_i f_i
= (\boldsymbol\rho^p)^\top \boldsymbol f.
\]
Substituting into \eqref{eq:ro-time-1} yields the robust fleet-time constraint
\begin{equation}
\label{eq:time-constraint-robust}
(\bar{\boldsymbol t}_p + \boldsymbol\rho^p)^\top \boldsymbol f
\le R.
\end{equation}
Thus the box-robust counterpart simply replaces each path travel time~$t_p$ by its inflated value
\[
\tilde{t}^{\mathrm{rob}}_p
:= \bar{t}_p + \rho^p_p.
\]

When the uncertainty is induced from link travel times as in \cref{def:unc-link-time}, we can equivalently work
at the edge level: the robust constraint can be rewritten as
\[
\sum_{e\in \edges} \bigl(\bar{T}(e)+\rho^T_e\bigr) y_e \le R,
\]
where~$y_e$ is the aggregate flow on edge~$e$. 
This shows that the robust counterpart is identical to the deterministic constraint with \emph{inflated edge travel times}
\[
\tilde{T}^{\mathrm{rob}}(e) := \bar{T}(e) + \rho^T_e.
\]

\begin{lemma}[Compatibility under travel-time uncertainty]
\label{prop:ro-time}
Consider the robust \gls{acr:andp} with box uncertainty on
link travel times as in \cref{def:unc-link-time}.
The robust counterpart of the fleet-time constraint is obtained by replacing each edge travel time~$T(e)$ by~$\tilde{T}^{\mathrm{rob}}(e) = \bar{T}(e)+\rho^T_e$ in calculating the path travel time in \cref{form:path} and in the master problem (\cref{prob:mp}). 
The LP master, its dual, and the pricing problem remain of the same form, and the pricing problem is still an elementary \gls{acr:sprc} with edge resource~$\tilde{T}^{\mathrm{rob}}(e)$.
\end{lemma}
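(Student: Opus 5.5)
The plan is to lean on the derivation just preceding the statement and then track where the travel times enter the formulation. The robust fleet-time constraint is the semi-infinite family $(\bar{\boldsymbol t}_p+\boldsymbol\Delta^p)^\top\boldsymbol f\le R$ for all admissible $\boldsymbol\Delta^p$. Because $\boldsymbol f\ge 0$, the worst case is attained componentwise at $\Delta^p_p=\rho^p_p$, which gives~\eqref{eq:time-constraint-robust}.

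There is one point to make explicit here. \cref{def:unc-path-time} only asserts $|\Delta^p_p|\le\rho^p_p$, so I must also check that this bound is attained by an admissible link perturbation. It is: take $\Delta^T_e=\rho^T_e$ for every edge $e$. This single choice simultaneously realizes every path's upper radius, since $\Delta^p=A\boldsymbol\rho^T=\boldsymbol\rho^p$. Hence the link-induced uncertainty set and the path box have the same worst case, and the robust constraint is exact rather than conservative.

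Next I rewrite the inflated path times at the edge level. Using $\rho^p_p=\sum_{e\in p}\rho^T_e$, we get $\tilde t^{\mathrm{rob}}_p=\sum_{e\in p}\bigl(\bar T(e)+\rho^T_e\bigr)=\sum_{e\in p}\tilde T^{\mathrm{rob}}(e)$. So the robust problem is exactly \cref{form:path} with $T$ replaced by $\tilde T^{\mathrm{rob}}$ in the coefficient $t_p$. The remaining parts are unaffected by the uncertainty: the capacity rows, the budget row, the demand rows, and the incidence data $z^{od,p}_{ij}$. Relaxing $x_{ij}$ then yields \cref{prob:mp} with $t_p$ replaced by $\tilde t^{\mathrm{rob}}_p$. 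Dualizing row by row, as in \cref{prob:dual}, gives the same dual with $\mu t_p$ replaced by $\mu\tilde t^{\mathrm{rob}}_p$ in the path constraints. Only coefficients change; no constraint is added or removed.

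For pricing, I substitute into~\eqref{eq:reduced-cost} and repeat the decomposition used in \cref{prop:sprc}. This gives
$\operatorname{rc}(p)=\sum_{e\in p}\bigl(\beta_e-u^\star_e-\mu^\star\tilde T^{\mathrm{rob}}(e)\bigr)-v^\star_{od}$,
so the pricing problem is the elementary \gls{acr:sprc} with costs $c_e=\mu^\star\tilde T^{\mathrm{rob}}(e)+u^\star_e-\beta_e$.

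The step I expect to be the main subtlety is the resource side of the pricing problem. The statement says the edge resource is $\tilde T^{\mathrm{rob}}(e)$, which requires the admissible set to be defined using robust times as well. I would argue this in one of two ways, and state the convention explicitly:
\begin{enumerate}
\item Apply the same robust reasoning to the QoS constraint $T(p)\le M_{od}$. Requiring it for all realizations gives $\sum_{e\in p}\tilde T^{\mathrm{rob}}(e)\le M_{od}$ by the identical componentwise-maximum argument, with the same all-$\rho$ perturbation attaining the worst case.
\item Alternatively, note that the lemma concerns the form of the problem. The SPRC structure (additive edge cost, additive nonnegative edge resource, elementarity) holds for any positive edge resource, so \cref{algo:preprocess,algo:rcsp} and \cref{lem:preprocess-preserve,lem:dominance,lem:label-optimality} apply verbatim with $T$ replaced by $\tilde T^{\mathrm{rob}}$. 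Their proofs use only additivity and positivity, and $\tilde T^{\mathrm{rob}}(e)\ge\bar T(e)>0$.
\end{enumerate}
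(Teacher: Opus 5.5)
Your proposal is correct and follows essentially the same route as the paper. Like the paper, you take the robust fleet-time constraint derived just before the lemma, rewrite $\tilde t^{\mathrm{rob}}_p$ edge-additively as $\sum_{e\in p}\tilde T^{\mathrm{rob}}(e)$, and carry this substitution through the master problem, its dual, and the reduced cost to obtain the SPRC.

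Your additional checks make explicit what the paper's ``replaced \dots throughout'' leaves tacit. First, the single perturbation $\Delta^T_e=\rho^T_e$ attains the path-box worst case, so the counterpart is exact for the link-induced set. Second, the pricing resource is $\tilde T^{\mathrm{rob}}(e)$ precisely when the admissibility constraint $T(p)\le M_{od}$ is also robustified, as in your option~1. Option~2 alone would only show that the algorithms work for whichever positive resource is used, not that the resource must be $\tilde T^{\mathrm{rob}}(e)$.
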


\begin{proof}
The robust counterpart of the fleet-time constraint is
\eqref{eq:time-constraint-robust}. 
When uncertainty is induced from link times, this constraint is equivalent to replacing~$T(e)$ by $\bar{T}(e)+\rho^T_e$ in the deterministic formulation, as argued above. 
The dual of the robust master problem is therefore identical in structure to \cref{prob:dual}, with~$t_p$ replaced by~$\tilde{t}^{\mathrm{rob}}_p$. 
The reduced-cost expression \eqref{eq:reduced-cost} and the SPRC reformulation in \cref{prop:sprc} remain valid after this substitution, with edge resource~$T(e)$ replaced by~$\tilde{T}^{\mathrm{rob}}(e)$ throughout.
\end{proof}

\begin{remark}[Path-level travel-time uncertainty beyond link-induced boxes]
\label{rem:path-uncertainty}
The analysis above assumes that path travel time uncertainty is induced from link-level box uncertainty as in \cref{def:unc-link-time}.
In this case, the path-wise inflation radii~$\rho^p_p$ decompose additively over the edges of a path, and the pricing problem remains an elementary \gls{acr:sprc} with edge-additive costs and resources.

If one instead postulates path-level travel-time uncertainty that is not induced from link times, the inflation terms~$\rho^p$ need not decompose additively over the edges of~$p$.
The reduced cost in the pricing problem then contains an additional path-level term, which breaks pure edge-additivity and complicates the dominance structure of the label-correcting algorithm.
To solve the robust counterpart, we can define a maximum difference in uncertainty by $\rho_{\Delta}=\max(\boldsymbol{\Delta^p})-\min(\boldsymbol{\Delta^p})$. In this case, the pricing represents a shortest path problem as before, with the dominance relationship in \cref{def:label} redefined as:
        \[c_1\leq c_2, t_1 -t_2 \leq \rho_{\Delta},\text{ and } \mathcal{V}^{\mathsf{vis}}_1 \subseteq \mathcal{V}^{\mathsf{vis}}_2.\]
The rest of the algorithm stays the same.
The robust counterpart exhibits higher complexity compared to the deterministic case, as the redefined dominance relationship results in weaker dominance pruning in the algorithm outlined in \cref{algo:rcsp}. This increased complexity is not a flaw of the proposed algorithm itself, but rather a characteristic of robust optimization in general, which tends to involve greater complexity compared to its deterministic. 
A detailed analysis of such general path-level uncertainty is beyond the scope of this paper; here we restrict attention to the structurally natural and practically relevant case of link-induced box uncertainty in \cref{def:unc-link-time}.
\end{remark}

\subsubsection{Uncertainty in demand}
We now consider box uncertainty in \gls{acr:od} demand. 
In the path-based formulation (\cref{form:path}), demand constraints read
\begin{equation}
\label{eq:det-demand}
\sum_{p\in P_{od}^{\mathrm{adm}}} f^p_{od}
\le \alpha_{od},
\qquad \forall (o,d)\in \tilde D.
\end{equation}
Let the nominal demand be~$\bar{\alpha}_{od}$ and model uncertain demand as
\[
\tilde{\alpha}_{od}
= \bar{\alpha}_{od} + \Delta^{\alpha}_{od},
\qquad |\Delta^{\alpha}_{od}| \le \rho^{\alpha}_{od}.
\]
Robust feasibility requires that \eqref{eq:det-demand} hold
for all~$\tilde{\alpha}_{od}$ in the box, i.e.,
\[
\sum_{p\in P_{od}^{\mathrm{adm}}} f^p_{od}
\le \tilde{\alpha}_{od}
\quad \forall \Delta^{\alpha}_{od}
\text{ with }
|\Delta^{\alpha}_{od}|\le \rho^{\alpha}_{od}.
\]
Since the right-hand side is uncertain and we are imposing
a ``$\le$'' constraint, the most restrictive realization is the smallest possible demand:
\begin{align}
\sum_{p\in P_{od}^{\mathrm{adm}}} f^p_{od}
&\le \min_{|\Delta^{\alpha}_{od}|\le \rho^{\alpha}_{od}}
(\bar{\alpha}_{od} + \Delta^{\alpha}_{od}) \nonumber\\
&= \bar{\alpha}_{od} - \rho^{\alpha}_{od}.
\label{eq:ro-demand}
\end{align}

\begin{lemma}[Compatibility under demand uncertainty]
\label{prop:ro-demand}
Under box uncertainty in \gls{acr:od} demand, the robust counterpart of the demand constraints~\eqref{eq:det-demand} is obtained by replacing each demand bound~$\alpha_{od}$ by~$\bar{\alpha}_{od} - \rho^{\alpha}_{od}$. 
The resulting robust \gls{acr:andp} has the same structure as the deterministic formulation, and the column-generation algorithm applies without modification.
\end{lemma}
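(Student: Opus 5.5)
The plan is to mirror the proof of \cref{prop:ro-time}, treating demand uncertainty as a right-hand-side perturbation only. First, I establish the robust counterpart itself. Fix an OD pair $\tup{o,d}$ and a flow vector $\boldsymbol f\ge 0$. The constraint ``$\sum_{p} f^p_{od}\le \tilde\alpha_{od}$ for all $|\Delta^\alpha_{od}|\le\rho^\alpha_{od}$'' is a family of constraints indexed by the scalar $\Delta^\alpha_{od}$, and its left-hand side does not depend on $\Delta^\alpha_{od}$. Hence the family holds if and only if it holds at the infimum of the right-hand side over the interval. That infimum is attained at $\Delta^\alpha_{od}=-\rho^\alpha_{od}$, which gives \eqref{eq:ro-demand}. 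Since each demand constraint involves only its own uncertain parameter, and the box is a product of intervals, the robust counterpart separates across OD pairs. Under box uncertainty the robust feasible set is therefore exactly the feasible set of \cref{form:path} with $\alpha_{od}$ replaced by $\bar\alpha_{od}-\rho^\alpha_{od}$. One caveat must be noted: if $\bar\alpha_{od}-\rho^\alpha_{od}<0$, the robust problem is infeasible unless we assume $\rho^\alpha_{od}\le\bar\alpha_{od}$. I would state this as a standing assumption, or clip the bound to zero, which forces $f^p_{od}=0$ and is still of the same form.

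Second, I argue the structural claim. The modified problem is again an instance of \cref{form:path}: the constraint matrix, the objective coefficients $\beta^p_{od}$, the path sets $P^{\mathrm{adm}}_{od}$ and the travel times $t_p$ are all unchanged, and only the right-hand-side vector changes. Consequently the LP relaxation is \cref{prob:mp} with the new right-hand side. Its dual is \cref{prob:dual} with the objective coefficient of $v_{od}$ changed from $\alpha_{od}$ to $\bar\alpha_{od}-\rho^\alpha_{od}$. The dual constraints, and in particular the constraint indexed by paths, are identical.

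Third, I carry this through to the pricing step. The reduced cost \eqref{eq:reduced-cost} depends only on the dual constraint rows, not on the dual objective. So its formula is unchanged, even though the optimal dual values $\tup{u^\star,v^\star,\mu^\star}$ may differ numerically. It follows that \cref{prop:sprc} applies verbatim: pricing is the same elementary \gls{acr:sprc}, with edge costs $c_e=\mu^\star T(e)+u^\star_e-\beta_e$ and resource $T(e)$. The preprocessing in \cref{algo:preprocess} depends only on $T$ and $M_{od}$, so it is unaffected. Likewise, \cref{lem:preprocess-preserve}, \cref{lem:dominance} and \cref{lem:label-optimality} continue to hold. The column-generation algorithm therefore runs without modification on the adjusted data.

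The only real subtlety is the first step. We must justify that the worst case is the lower endpoint, which follows from monotonicity of the constraint in its right-hand side. We must also handle the nonnegativity caveat so that the statement ``same structure'' is not vacuous. Everything else is a direct observation that a right-hand-side change leaves both the columns and the pricing subproblem untouched.
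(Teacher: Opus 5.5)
Your proposal is correct and follows the paper's argument. The worst case at the lower endpoint of the demand interval gives \eqref{eq:ro-demand}, and because only right-hand sides change, the dual constraints, the reduced-cost formula, and the SPRC pricing are all unchanged; your added remarks on per-OD separability and on needing $\rho^{\alpha}_{od}\le\bar{\alpha}_{od}$ (otherwise the robust problem is infeasible), which the paper leaves implicit, are valid refinements.
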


\begin{proof}
Equation~\eqref{eq:ro-demand} shows that robust feasibility
is equivalent to tightening the right-hand side of the demand constraints. 
This change affects only the parameters~$\alpha_{od}$ in the master problem and the constant term in the dual objective; it does not alter the constraints of the dual
or the pricing problem. 
The decomposition and pricing structure are therefore unchanged.
\end{proof}

\subsubsection{Summary of robust extension}
Under box uncertainty, both in link travel times and in OD demand, the robust \gls{acr:andp} is obtained by:
\begin{itemize}
    \item inflating edge travel times from~$T(e)$ to
    $\tilde{T}^{\mathrm{rob}}(e) = \bar{T}(e) + \rho^T_e$ in
    the fleet-time constraint and in the pricing problem, and
    \item tightening OD demand bounds from~$\alpha_{od}$ to~$\bar{\alpha}_{od} - \rho^{\alpha}_{od}$.
\end{itemize}

These modifications preserve the structure exploited by the
proposed algorithm: the master problem remains an LP of the
same form, and the pricing subproblem remains an elementary
\gls{acr:sprc} on the road network with appropriately updated edge travel times. 
The robust and deterministic problems can therefore be solved by the \emph{same} \gls{acr:cg}-based framework, differing only in parameter values.

\subsection{Overall CG Algorithm and Optimality Gap}
\label{subsec:cg-alg}

We are now ready to present the full \gls{acr:cg} algorithm
for the (deterministic or robust) path-based \gls{acr:andp}. Starting from an initial restricted path set~$P^0 \subseteq P^{\mathrm{adm}}$, we iterate between solving the restricted LP master problem and solving a pricing problem for each OD pair in parallel using the SPRC algorithm of \cref{subsec:pricing-alg}. 
Whenever the maximum reduced cost is strictly positive, the related path is added to the path set.
The process terminates when no such path remains.

\begin{algorithm}[tb]
\small
\caption{Column generation for path-based formulation}
\label{algo:cg}
\begin{algorithmic}[1]
\Require Base network $\mathcal{G}=(\mathcal{V},\mathcal{E})$, OD set $\tilde{D}$, initial path $P^0 \subset P^{\mathrm{adm}}$; tolerance $\varepsilon>0$
\Ensure Final column set $P^\star$, LP solution $(\des_{\mathrm{LP}},\mathbf{f}_{\mathrm{LP}})$, and integer solution $(\des_{\mathrm{IP}},\mathbf{f}_{\mathrm{IP}})$ on $\mathcal{P}^\star$

\State $P^r\gets P^0$

\While{\textbf{true}}
  \State \textcolor{blue}{// Solve restricted master and read duals}
  \State Solve RMP restricted to $P^r$ and obtain $(\des,\mathbf{f})$
  \State Obtain dual multipliers $(\mu,\mathbf{v},\mathbf{u},\pi,\boldsymbol{\delta})$

  \State $P_{\text{new}}\gets \emptyset$

  \For{each $(o,d)\in\tilde{D}$ \textbf{in parallel}}
    \State \textcolor{blue}{// Pricing for OD: maximize reduced cost}
    \State Run \text{\Cref{algo:rcsp}}, obtain $p^*$ and reduced cost $\text{rc}(p^\star)$
    \If{$\text{rc}(p^\star)>\varepsilon$}
      \State $P_{\text{new}}\gets P_{\text{new}}\cup\{p^\star\}$
    \EndIf
  \EndFor

  \If{$P_{\text{new}}=\emptyset$}
    \State \textbf{break} \Comment{No violated dual constraints}
  \Else
    \State $P^r \gets P^r \cup P_{\text{new}}$
  \EndIf
\EndWhile

\State $P^\star \gets P^r$
\State $(\des_{\mathrm{LP}},\mathbf{f}_{\mathrm{LP}})\gets (\mathbf{x},\mathbf{f})$ \Comment{Optimal for LP relaxation over $P^\star$}

\State \textcolor{blue}{// Re-impose integrality and solve restricted MILP}
\State Solve \Cref{form:path} restricted to $P^\star$ and obtain $(\des_{\mathrm{IP}},\mathbf{f}_{\mathrm{IP}})$

\State \textbf{return } $P^\star$, $(\des_{\mathrm{LP}},\mathbf{f}_{\mathrm{LP}})$, $(\des_{\mathrm{IP}},\mathbf{f}_{\mathrm{IP}})$
\end{algorithmic}
\end{algorithm}

The solution algorithm can be summarized in three
conceptual steps:
\begin{enumerate}
    \item \textbf{Problem relaxation}: relax the
    \gls{acr:milp} \cref{form:path} to its
    \gls{acr:lp} relaxation (the master problem, \cref{prob:mp}) by allowing~$\des_{ij}\in[0,1]$.
    \item \textbf{\gls{acr:cg} iteration}: decompose the
    relaxed problem into a master problem and a pricing
    problem. 
    Solve them iteratively: the restricted master
    problem is solved over the current path set, and the
    pricing problem (an elementary \gls{acr:sprc}) is solved
    for each OD pair to generate new paths with positive
    reduced cost. 
    This yields a path set~$P^\star$ and an
    optimal solution of the LP relaxation over all admissible
    paths.
    \item \textbf{Integer solution recovery}: re-impose the
    integrality constraints on~$\des_{ij}$ and solve the resulting \gls{acr:milp} restricted to the path set $P^\star$.
\end{enumerate}

The complete algorithm is summarized in \ref{algo:cg}. Below are several notes on its implementation and properties. 

First, a tolerance parameter $\varepsilon$ is introduced to terminate the \gls{acr:cg} procedure when the maximum reduced cost falls below the threshold. This is necessary to account for numerical inaccuracies in \gls{acr:lp} solvers, which may return small positive values for reduced costs that are theoretically zero. Additionally, increasing $\varepsilon$ allows early termination of the process with limited computational resources. 

Second, in each iteration of the algorithm, the pricing problem is solved independently for each $(o,d) \in \tilde D$, and all paths with positive reduced cost above the tolerance are added to the restricted path set. This strategy can significantly reduce the number of \gls{acr:cg} iterations required for convergence. Alternative variants are possible, such as adding only the top $K$ paths with the largest positive reduced costs. The number of paths introduced per iteration reflects a tradeoff between the cost of re-solving the \gls{acr:rmp} and the total number of iterations until convergence. There is no closed-form rule for selecting this number, and it depends on the parameters of the problem.

Third, the algorithm does not, in general, guarantee global optimality for the original \gls{acr:milp}; it optimizes the LP relaxation exactly and then finds the best integer solution on the generated columns.
However, it provides an explicit, computable bound on the optimality gap.

Let~$J^{\star}_{\mathrm{LP}}$ denote the optimal objective value of the LP relaxation over the full path set~$P^{\mathrm{adm}}$, and let~$J^{\star}_{\mathrm{IP}}$ denote the
objective value of the integer solution obtained in Step~3 by solving the restricted MILP over~$P^\star$.

\begin{theorem}[Correctness and optimality gap]
\label{thm:cg-gap}
The \gls{acr:cg} algorithm
(\cref{algo:cg}) has the following properties:
\begin{enumerate}
    \item It terminates in finitely many iterations and returns
    an LP solution with objective value~$J^{\star}_{\mathrm{LP}}$ for the (deterministic or robust) master problem over the full admissible path set.
    \item The integer solution~$\tup{\des_{\mathrm{IP}},\mathbf{f}_{\mathrm{IP}}}$
    computed on~$P^\star$ is feasible for the original
    MILP (deterministic or robust), and the optimality gap $\epsilon$ is upper-bounded by the relative optimality gap bound
    \[
    \epsilon
    \leq \frac{J^{\star}_{\mathrm{LP}} -
             J^{\star}_{\mathrm{IP}}}{J^{\star}_{\mathrm{LP}}}.
    \]
\end{enumerate}
\end{theorem}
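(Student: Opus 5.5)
For part 1, the plan is to argue finiteness first and then LP optimality at termination. For finiteness, each OD pair has only finitely many elementary $o$--$d$ paths in the finite graph $\graph$, so $P^{\mathrm{adm}}$ is finite. I will show that every path added in an iteration is new, i.e., $p^\star\notin P^r$. If $p^\star\in P^r_{od}$, then LP complementary slackness and dual feasibility of the restricted master give $\operatorname{rc}(p^\star)\le 0$ for every column already present. This contradicts $\operatorname{rc}(p^\star)>\varepsilon\ge 0$. Hence $P^r$ grows strictly in each non-terminal iteration, and the loop stops after at most $|P^{\mathrm{adm}}|$ iterations.

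For optimality at termination, fix the dual solution $(u^\star,v^\star,\mu^\star,\pi^\star,\delta^\star)$ of the final RMP. By \cref{prop:sprc}, \cref{lem:preprocess-preserve} and \cref{lem:label-optimality}, \cref{algo:rcsp} returns the maximum reduced cost over all of $P^{\mathrm{adm}}_{od}$, so no $(o,d)$ has a path with positive reduced cost. This means the dual vector satisfies every path constraint of \cref{prob:dual}, not only those for $P^r$. The edge constraints $-c_{ij}u_{ij}+b_{ij}\pi+\delta_{ij}\ge0$ do not depend on the column set, so they hold as well. The vector is therefore feasible for the full dual, and its objective equals the RMP optimum by strong duality. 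Since the RMP primal optimum is feasible for the full master, weak duality squeezes both values to $J^\star_{\mathrm{LP}}$. For the robust version I will simply invoke \cref{prop:ro-time} and \cref{prop:ro-demand}, which show the same structure with modified parameters.

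\textbf{Main obstacle: the tolerance $\varepsilon>0$.} With a positive tolerance, the termination test only certifies $\operatorname{rc}(p)\le\varepsilon$, and the dual is then only approximately feasible. I will state the result for exact arithmetic ($\varepsilon\to0$), i.e., treat $\varepsilon$ purely as a numerical guard. I will also note the standard Lagrangian-type correction in which the LP value is at most the RMP value plus $\varepsilon\sum_{od}\alpha_{od}$, since each path flow is bounded by the demand constraints.

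For part 2, feasibility of the recovered integer solution is immediate. $P^\star\subseteq P^{\mathrm{adm}}$, so the restricted MILP is the original MILP with the extra path variables fixed to zero, and its solution satisfies all constraints of \cref{form:path}, or of its robust counterpart. For the gap, let $J^\star_{\mathrm{MILP}}$ denote the true optimum. Then
\[
J^\star_{\mathrm{IP}}\le J^\star_{\mathrm{MILP}}\le J^\star_{\mathrm{LP}}:
\]
the first inequality holds by feasibility, and the second because the master problem is a relaxation. The true relative gap is therefore
\[
\epsilon=\frac{J^\star_{\mathrm{MILP}}-J^\star_{\mathrm{IP}}}{J^\star_{\mathrm{MILP}}}\le\frac{J^\star_{\mathrm{LP}}-J^\star_{\mathrm{IP}}}{J^\star_{\mathrm{LP}}}.
\]
To justify this last step I need positive objective values. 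I will note that $f=0,x=0$ is feasible, so $J^\star_{\mathrm{IP}}\ge0$, and I will assume $J^\star_{\mathrm{IP}}>0$. The map $s\mapsto (s-a)/s=1-a/s$ with $a=J^\star_{\mathrm{IP}}\ge 0$ is nondecreasing in $s>0$, and $J^\star_{\mathrm{MILP}}\le J^\star_{\mathrm{LP}}$, which gives the inequality.
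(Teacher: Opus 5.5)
Your proposal is correct and follows the same standard column-generation argument as the paper: exact pricing (via the label-correcting optimality lemma) makes the final RMP duals feasible for the full dual, hence LP-optimal, and the relaxation bound together with $P^\star\subseteq P^{\mathrm{adm}}$ gives feasibility of the recovered integer solution and the gap. You also supply details the paper's proof leaves implicit: the strict growth of $P^r$ that gives finite termination, the effect of the tolerance (e.g.\ $J^\star_{\mathrm{LP}}\le J_{\mathrm{RMP}}+\varepsilon\sum_{(o,d)}\alpha_{od}$), and the monotonicity of $s\mapsto 1-J^\star_{\mathrm{IP}}/s$ needed to pass from the gap relative to the true MILP optimum to the bound normalized by $J^\star_{\mathrm{LP}}$, which the paper compresses into ``dividing by $J^\star_{\mathrm{LP}}$''.
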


\begin{proof}
The proof follows the standard arguments for \gls{acr:cg}. By \cref{lem:label-optimality}, \cref{algo:rcsp} solves each pricing problem (SPRC) exactly, both in the deterministic and in the robust case (the latter after inflating edge travel times as in \cref{prop:ro-time}). 
When \cref{algo:cg} terminates, no path with positive
reduced cost exists for any OD pair, which implies that all dual constraints in the dual master problem are satisfied and no improving column remains. 
The current restricted master solution is therefore optimal for the LP relaxation over the \emph{full} admissible path set, with objective value~$J^{\star}_{\mathrm{LP}}$.

The LP relaxation provides an upper bound on the mixed‑integer optimum (deterministic or robust), so any feasible integer solution has objective value at most~$J^{\star}_{\mathrm{LP}}$.
The integer solution~$(\des_{\mathrm{IP}},\mathbf{f}_{\mathrm{IP}})$ constructed in Step~3 is feasible for the original MILP because it uses a subset~$P^\star \subseteq P^{\mathrm{adm}}$ of admissible paths and satisfies all constraints. 
Thus~$J^{\star}_{\mathrm{LP}} - J^{\star}_{\mathrm{IP}}$ bounds the absolute optimality gap, and dividing by~$J^{\star}_{\mathrm{LP}}$ yields the claimed relative gap.
\end{proof}

In large-scale instances, the generated path set~$P^\star$ is typically orders of magnitude smaller than the full admissible set~$P^{\mathrm{adm}}$, making the final restricted \gls{acr:milp} tractable while retaining a rigorous upper bound on the true optimum.
The algorithm exhibits an anytime property: additional \gls{acr:cg} iterations (or a tighter reduced‑cost
tolerance) monotonically improve~$J^{\star}_{\mathrm{LP}}$ and typically also improve the quality of the recovered integer solution. 
The same guarantees hold for both the deterministic and the robust \gls{acr:andp} under box uncertainty in travel times and demand.

\section{Case Study} \label{sec:exp}
In this section, we evaluate the proposed formulation and algorithm using real-world data from Manhattan, New York City. 
We start by describing the datasets and experimental setup.
We then use the framework to (i) characterize the structure and temporal stability of optimal operation subnetworks, (ii) study the joint impact of fleet-time and infrastructure budgets on profitability, and (iii) assess the effect of a path-level risk constraint that limits left turns. All experiments were conducted on a HPC cluster with CPU-only nodes (96 cores, ~520 GB RAM per node) with Gurobi 10.0.2. Each solve used 4 CPU threads and 16 GB RAM.

\subsection{Datasets}
The case study requires two primary inputs: a demand model and a base road network. 

\paragraph*{Demand data}
The demand data is derived from the NYC Taxi and Limousine Commission (TLC) dataset~\cite{nyc_tlc_trip_data}, which contains detailed trip-level records and is widely used in \gls{acr:amod} fleet control research~\cite{li2025reproducibility}.
We focus on trips whose origins and destinations lie within Manhattan and consider the month of May 2024, which corresponds to an average of approximately 200,000 trips per day.
Unless otherwise stated, we treat each day as an independent demand realization over the planning horizon and solve one instance of the \gls{acr:andp} per day.

\paragraph*{Road network}
The base road network, including the network structure, road length, expected travel time, and capacity, is constructed from OpenStreetMap (OSM)~\cite{OpenStreetMap}. 
The raw OSM representation of Manhattan consists of approximately 4,600 nodes and 9,900 directed edges. 
Because this representation is designed for geographic completeness rather than optimization, it includes many intermediate nodes and minor segments that are not necessary for network design.
We therefore preprocess the network by contracting degree-2 nodes on straight segments and retaining only arterial and local nodes that are essential for preserving connectivity.
After preprocessing, the network is reduced to approximately 4,300 nodes and 7,000 directed edges.

\paragraph*{Demand-network alignment}
In the TLC dataset, Manhattan is partitioned into 63 zones, with trip origins and destinations recorded at the zone level. 
To align demand with the road network, we project each zone centroid to the nearest node in the base network and use these nodes as \gls{acr:od} points. 
This yields a consistent representation in which each \gls{acr:od} pair in~$\tilde D$ is associated with a node pair in the road graph.

\begin{figure*}[tb]
    \centering
    \includegraphics[width=0.8\linewidth]{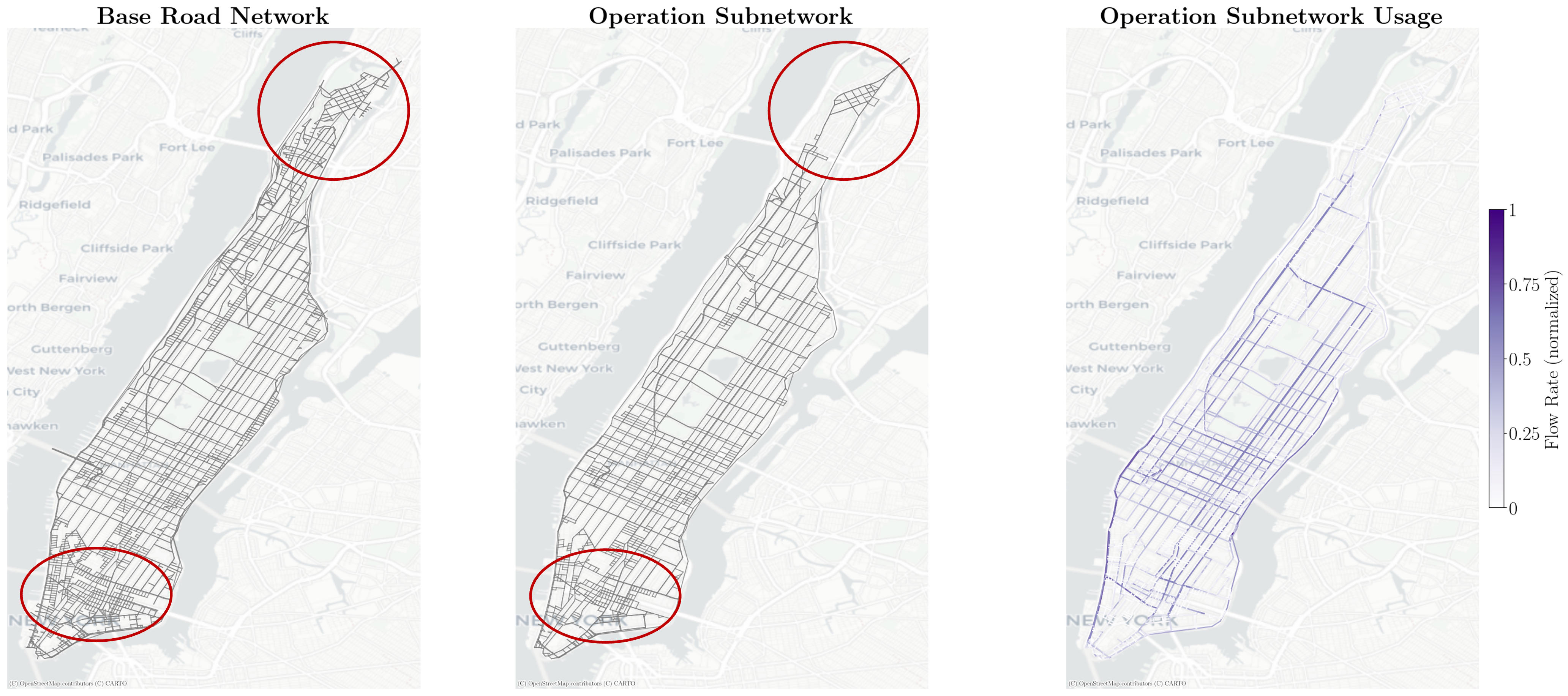}
    \caption{Example of solution under a fixed fleet-time and infrastructure budget.
    From left to right: base road network, optimal operation subnetwork, and
    optimal vehicle flow distribution on the operation subnetwork.}
    \label{fig:solution}
\end{figure*}

\subsection{Computational efficiency}
Before analyzing the structure of the optimal solution, we report the computational performance of the proposed framework. As presented in \cref{subsec:cg-alg}, the method solves the \gls{acr:lp} relaxation via a \gls{acr:cg}-based decomposition and then enforces integrality on the edge-instrumentation decisions $\des$ in a final \gls{acr:milp} solve. We run the algorithm for the 31 instances of demand from the month of May with a fleet-time budget of 16,000 vehicle hours and an infrastructure budget of \$600,000 US dollars. For all instances, the final \gls{acr:milp} solve terminated with a relative optimality gap between the optimal LP-relaxation objective and the optimal integer objective below $10^{-6}$ within 25 minutes.

The results indicate that the decomposition based on the \gls{acr:cg} significantly reduces the number of path variables in the final \gls{acr:milp}, from an unsolvable exponential number of variables to a tractable set. However, \gls{acr:milp} problems are generally challenging to solve, and their difficulty is highly instance-dependent. Therefore, the resulting formulations may have weaker relaxations and require substantially longer branch-and-bound search. Since accelerating worst-case \gls{acr:milp} solution times is not the focus of this work, we do not claim that all instances will be solved to proven optimality within a fixed runtime. For broader experimentation, a pragmatic approach is to impose a time limit and/or a prescribed relative optimality gap for the final \gls{acr:milp} solve, trading off compute time against solution quality. This is standard in large-scale applied \gls{acr:milp}: modern solvers expose relative/absolute MIP-gap tolerances explicitly as termination criteria, and setting such a gap is a commonly recommended option when proven optimality is computationally expensive~\cite{gurobiMIPGapGuidelines}.

\subsection{Operation subnetwork structure and temporal stability} \label{sec:solution}

We start analyzing the solutions by examining the structure of an optimal operation subnetwork and the associated vehicle flows.
Using demand data from May 15, 2024, \cref{fig:solution} shows (from left to right) the base road network, the optimal operation subnetwork, and the corresponding optimal vehicle flow distribution for a fleet-time budget of 16,000 vehicle hours and an infrastructure budget of \$600,000 US dollars.

A few observations follow.
First, as highlighted by the circled regions, not all local streets in the base network are chosen for autonomous operation. 
Instead, the optimal subnetwork focuses infrastructure on corridors that are most useful for serving the observed demand under the given resource constraints.
This suggests that, even in a dense urban grid, a relatively sparse operation network can achieve near-optimal performance when designed jointly with fleet capacity.
Second, the flow distribution on the operation subnetwork captures the spatial pattern of vehicle activity.
Edges with high flow in \cref{fig:solution} correspond to critical corridors and carry the bulk of passenger traffic.
Identifying these corridors is useful for a range of system-level decisions, such as depot placement, charging infrastructure, curb-space allocation, or targeted capacity upgrades.

The example in \cref{fig:solution} uses a single day of demand.
To assess how sensitive the optimal subnetwork is to day-to-day demand variability, we repeat the design computation for each day of May 2024, yielding 31 independent solutions under the same fleet-time and budget constraints.
\cref{fig:month}(a) reports, for each edge, the frequency with which it is included in the optimal operation subnetwork across these 31 days. 
Among all edges that appear at least once in the solutions, more than 90\% are included in over 25 of the 31 solutions, indicating that the optimal operation subnetwork is highly stable across days.
This stability reflects the regularity of urban demand patterns and suggests that an operator can design a medium-term operation network that remains near-optimal over many days.
From an operational perspective, this temporal stability also suggests that the subnetworks designed under our static fleet-time approximation are not artifacts of one specific demand realization.
Even though we do not model rebalancing explicitly, the induced operation subnetwork remains essentially unchanged across 31 independent daily demand realizations, which is consistent with the idea that infrastructure decisions evolve on a slower time scale than short-term fleet imbalances.

Finally, we illustrate the robust extension of \cref{subsec:robust}.
Using the one-month demand data, we instantiate a robust \gls{acr:andp} in which \gls{acr:od} demands are subject to box uncertainty calibrated from the observed variability.
\cref{fig:month}(b) shows the resulting robust operation subnetwork.
Compared to the edge-frequency map in \cref{fig:month}(a), the robust design highlights a core set of corridors that are consistently useful across demand realizations, providing a single network that hedges against day-to-day demand fluctuations.

\begin{figure}[tb]
    \centering
    \includegraphics[width=0.9\linewidth]{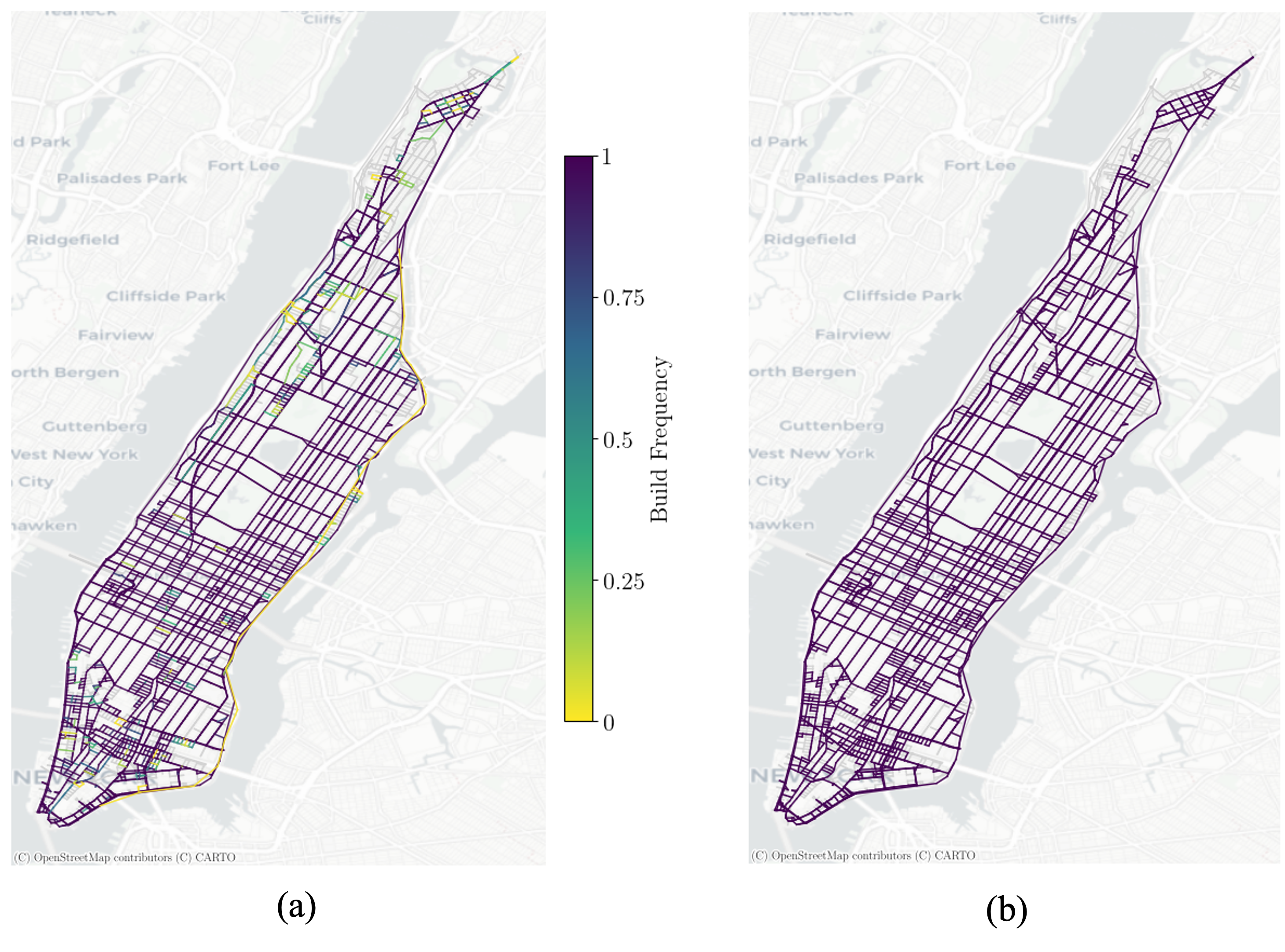}
    \caption{Design solutions with one month of demand data.
    (a) Edge instrumentation frequency over 31 daily designs.
    (b) Robust operation subnetwork obtained from the robust \gls{acr:andp}
    under demand uncertainty.}
    \label{fig:month}
\end{figure}

\subsection{Sensitivity to fleet time and infrastructure budget}
The previous subsection fixed the fleet-time and budget constraints.
We now study how service profitability responds to changes in these two design levers and how they interact.

Using demand data from May 15, we first solve the \gls{acr:andp} \emph{without} fleet-time or budget constraints, obtaining an unconstrained optimal profit~$F_b$ together with the corresponding fleet-time usage~$T_b$ and infrastructure cost~$C_b$.
We then resolve the \gls{acr:andp} for a grid of fleet-time limits and budget limits.
For ease of interpretation, we normalize profit, fleet time, and budget by~$F_b$,~$T_b$, and~$C_b$, respectively.
The resulting profit surface is shown in \cref{fig:sens}.
Three distinct regimes emerge:
\begin{enumerate}
    \item \emph{Fleet-limited regime:} when the fleet-time budget
    is small, increasing the infrastructure budget yields little improvement in profit, as vehicle availability is the binding constraint.
    \item \emph{Budget-limited regime:} when the infrastructure budget is small, increasing fleet time alone does not significantly improve performance because insufficient road instrumentation restricts feasible operations.
    \item \emph{Capacity-limited regime:} when both fleet time and budget are large, road capacity becomes the dominant constraint and profitability saturates, being ultimately    limited by the physical capacity of the network.
\end{enumerate}

The first two regimes underscore the importance of jointly designing fleet size and infrastructure investment.
Investing in only one resource leads to diminishing returns once the other becomes binding.
The third regime has direct policy implications: it illustrates how municipal regulations that effectively cap the number of \glspl{acr:av} on the road (e.g., through fleet caps or
access restrictions) can bound the maximum achievable profit.
In this sense, the proposed framework can also serve as a policy analysis tool, enabling municipalities to quantify the system-level impact of regulatory decisions.

Importantly, \cref{fig:sens} shows that the structure of the optimal operation subnetwork changes smoothly across wide ranges of~$R$ and~$B$: there is no evidence of highly fragile designs that would collapse when the available fleet time is perturben within a realistic range.
This provides an indirect robustness check for the fleet-time proxy.
While detailed dynamic rebalancing may shift the level of the required budget, the relative trade-offs and the selected corridors remain stable across the tested operating regimes.

\begin{figure}[tb]
    \centering
    \includegraphics[width=0.9\linewidth]{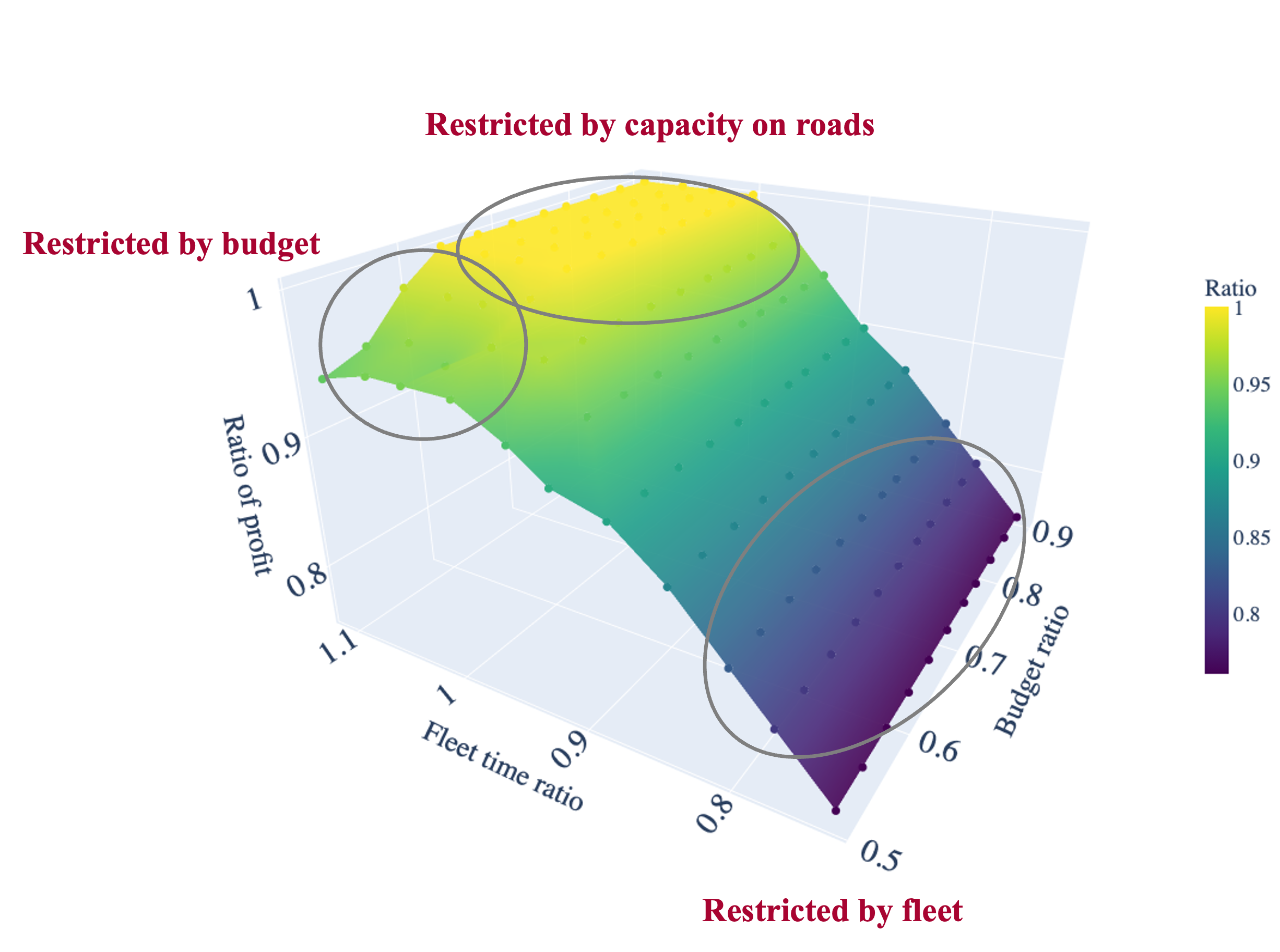}
    \caption{Sensitivity of normalized profit to fleet-time and infrastructure budget limits.
    The circled regions highlight regimes in which performance is limited by
    different constraints.}
    \label{fig:sens}
\end{figure}

\subsection{Path-level risk constraint: limiting left turns} \label{sec:left-turn-constraint}
We conclude the case study by illustrating how the framework handles additional path-level constraints, focusing on a risk-related constraint that limits left turns.

In autonomous driving, left-turn maneuvers, especially unprotected left turns, are a prominent intersection crash mode and pose significant challenges due to occlusions, complex interactions, and conflicting traffic streams. 
They are often used as canonical test scenarios for \gls{acr:av} decision-making~\cite{waymo2021safety}.
Limiting the frequency of such maneuvers has the potential to reduce operational risk.

To capture this effect in our model, we introduce a constraint on the total number of left turns performed by the fleet during operation.
Since OSM does not provide signal-phase information, we conservatively treat all left turns as unprotected and risky, and augment \cref{form:path} with the constraint
\begin{equation}
    \sum_{(o,d)\in\tilde{D}} \sum_{p \in P_{od}} lt_p f_{od}^{p} \le LT,
    \label{eq:lt_constraint}
\end{equation}
where~$lt_p$ is the number of left turns on path~$p$ and~$LT$ is the total allowed number of left turns over the planning horizon.
The parameter~$LT$ can be interpreted as a system-wide risk budget: smaller values enforce safer routing at the cost of reduced flexibility.

Let $\omega^\star$ denote the optimal dual variable associated with \cref{eq:lt_constraint}.
The pricing objective then becomes:
\[\beta^p_{od} - \sum_{(i,j) \in \mathcal{E}}u^*_{ij}z^{od,p}_{ij} - v^*_{od} - \mu^* t_p - \omega^\star lt_p,\]
which is equivalent to an \gls{acr:sprc} with an additional left-turn cost $\omega^\star$. Solving this problem requires only minor modification to \Cref{algo:rcsp}: the pricing algorithm keeps track of the predecessor node of each label to detect left turns and adds a cost $\omega^\star$ whenever a left turn is taken.
All other components of the formulation and algorithm remain
unchanged.

Using the same fleet time and budget constraints as in \Cref{sec:solution}, we solve the \gls{acr:andp} with and without the left-turn constraint for each day of May 2024 and for a range of~$LT$ values.
\Cref{fig:lt} reports, across the 31 days, the ratio of served demand and profit between the constrained and unconstrained solutions. 

With a tight left-turn budget (small $LT$), the number of admissible paths is severely reduced, limiting the system’s ability to serve demand and resulting in significant losses in served demand and profit.
As $LT$ increases, both metrics improve and eventually stabilize, indicating a regime in which additional left turns provide little marginal benefit.
\begin{figure}[tb]
    \centering
    \includegraphics[width=1.0\linewidth]{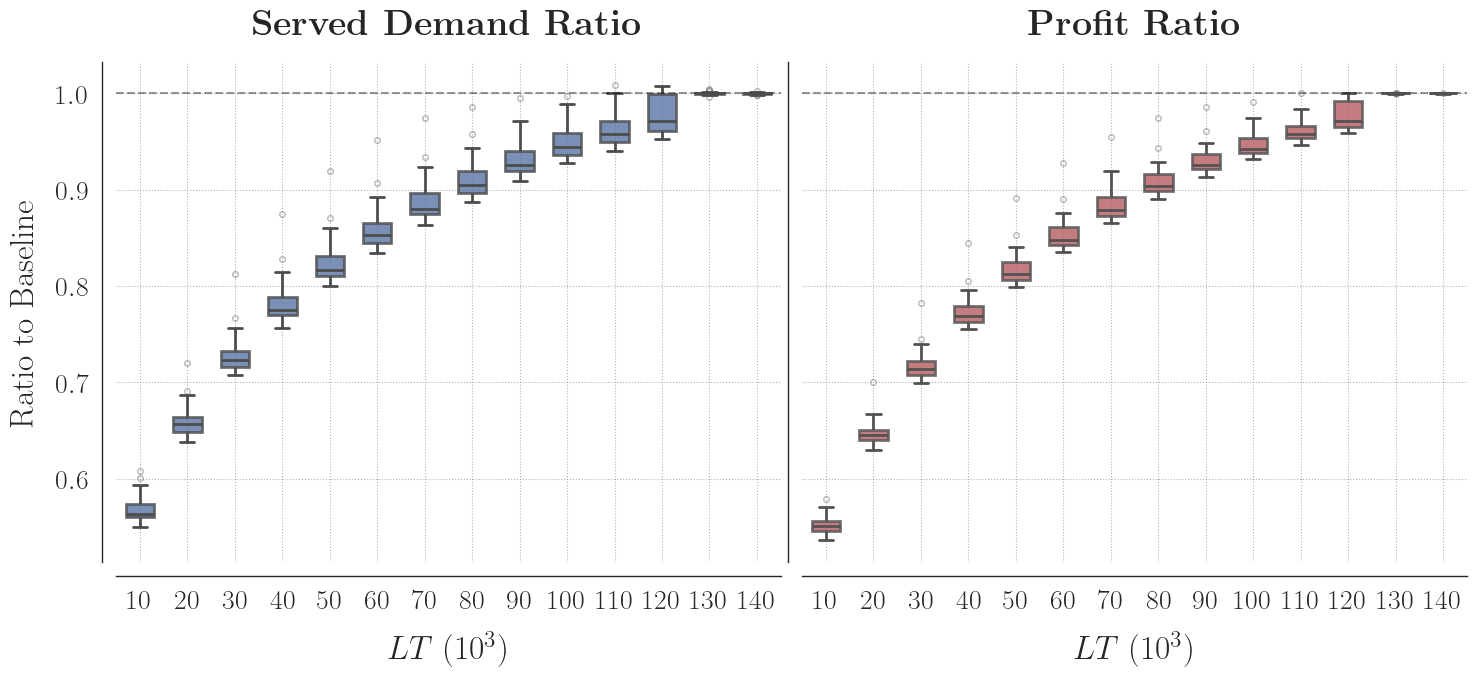}
    \caption{Ratio of served demand (left) and profit (right) with and without
    the left-turn constraint, for different values of the left-turn budget $LT$.
    The box plots show variability across the 31 days of May 2024.}
    \label{fig:lt}
\end{figure}

These results illustrate the substantial impact that path-level risk constraints can have on system performance and show how the proposed framework can be used to explore safety–performance trade-offs.
For operators, the left-turn budget $LT$ provides a tunable parameter to balance safety and profitability.
For municipalities, such experiments can inform guidelines on acceptable risk levels and help assess the impact of intersection design or turn restrictions on emerging \gls{acr:amod} services.

\section{Conclusion and future work} \label{sec:conclusion}
This paper studied the strategic design problem for \gls{acr:amod} systems, where a centralized operator must decide both \emph{where} autonomous vehicles can operate and \emph{how} to use a limited fleet to serve demand. 
We formalized this problem as the \gls{acr:andp}, in which the operator selects an operation subnetwork, chooses a fleet-time budget, and routes all passengers subject to infrastructure and fleet constraints and route-level \gls{acr:qos} and risk requirements. 
To address the resulting high-dimensional mixed-integer problem, we proposed a path-based formulation and a \gls{acr:cg}-based decomposition algorithm. The master problem solves the LP relaxation over a restricted path set, while the pricing problem reduces to an
elementary \gls{acr:sprc} solved exactly by a tailored label-correcting algorithm. 
The approach provides the exact LP optimum and an explicit certificate on the optimality gap of the recovered integer solution, and it extends to a robust counterpart under box uncertainty in travel times and demand with only minor parameter changes.

Using real-world network and demand data from Manhattan,
New York City, we demonstrated that the framework scales to city-sized instances and delivers high-quality solutions within practical computation times. 
The case study showed that the optimal operation subnetworks are structurally sparse yet highly stable across days, highlighted how profitability is jointly shaped by infrastructure budgets and fleet time limits, and illustrated the impact of path-level risk constraints, such as limits on left turns, on both safety and performance. 
These results indicate that the proposed framework can serve as a decision-support tool for operators and municipalities when
planning future \gls{acr:amod} deployments.

Several open questions remain and point to promising
directions for future research.
First, we plan to extend the robust extension beyond box uncertainty to richer uncertainty sets and alternative risk measures, enabling more nuanced views of demand and travel-time variability. 
Second, our steady‑state formulation abstracts away explicit empty‑vehicle rebalancing and time‑of‑day dynamics. 
While the Manhattan case study suggests that the resulting infrastructure designs are temporally stable, a more complete picture would couple network design with a dynamic \gls{acr:amod} control layer that models rebalancing flows explicitly, for example via a time‑expanded network or fluid model. 
We view such a multi‑scale design‑and‑control framework as a key direction for future research.
Third, we see potential in integrating the \gls{acr:andp} with public transit and other modes, co-optimizing multimodal services~\cite{zardini2022co}.

\bibliographystyle{IEEEtran}
{\footnotesize
\bibliography{references}}

\end{document}